\documentclass[a4paper,11pt]{article}

\usepackage{authblk}

\usepackage{amsmath}
\usepackage{amsthm}
\usepackage{amssymb}

\usepackage{algorithmic}
\usepackage{algorithm}
\usepackage{xcolor}
\usepackage{graphicx}
\usepackage{cite}
\usepackage{paralist}

\usepackage[left=1.0in,top=1.0in,right=1.0in,bottom=1.0in,nohead]{geometry}

\theoremstyle{definition}

\newtheorem{definition}{Definition}
\newtheorem{theorem}{Theorem}

\newtheorem{lemma}[theorem]{Lemma}

\theoremstyle{plain}

\newenvironment{ap_lemma}[1]{\par\noindent{\bf Lemma~#1.} \em}{}
\newenvironment{ap_theorem}[1]{\par\noindent{\bf Theorem~#1.} \em}{}


%
%

\graphicspath{{figures/}}

\def\MC#1{{\mathcal #1}}
\def\MB#1{{\mathbf #1}}
\def\MBB#1{{\mathbb #1}}

\newcommand{\BIGLR}[3]{{\left#1#3\right#2}}
\newcommand{\BIGP}[1]{{\BIGLR{(}{)}{#1}}}
\newcommand{\BIGBP}[1]{{\BIGLR{\{}{\}}{#1}}}
\newcommand{\CEIL}[1]{{\BIGLR{\lceil}{\rceil}{#1}}}

\newcommand{\BIGC}[1]{{\BIGLR{|}{|}{#1}}}

\newcommand{\OP}[1]{{\operatorname{#1}}}

\long\def\longdelete#1{}



\title{Online Power-Managing Strategy with Hard Real-Time Guarantees} 


\author[*]{Jian-Jia~Chen}
\author[+]{Mong-Jen~Kao}
\author[+]{D.T.~Lee}
\author[*]{Ignaz~Rutter}
\author[*]{Dorothea~Wagner}

\affil[*]{Faculty for Informatics, Karlsruhe Institute of Technology (KIT), Germany. \newline \textit{Email: j.chen@kit.edu, rutter@kit.edu, dorothea.wagner@kit.edu}}

\affil[+]{Research Center for Infor. Tech. Innovation, Academia Sinica, Taiwan. \newline
\textit{Email: mong@citi.sinica.edu.tw, dtlee@ieee.org}
}







\begin{document}

\maketitle

\pagestyle{plain}


\begin{abstract}
We consider the problem of online dynamic power management that provides hard real-time guarantees.
In this problem, each of the given jobs is associated with an arrival time, a deadline, and an execution time, and the objective is to decide a schedule of the jobs as well as a sequence of state transitions on the processors so as to minimize the total energy consumption.
%
%
%
%
%
%
%
In this paper, we examine the problem complexity and provide online strategies to achieve energy-efficiency.
%
First, we show that the competitive factor of any online algorithm for this problem is at least $2.06$.
%
Then we present an online algorithm which gives a $4$-competitive schedule.
When the execution times of the jobs are unit, we show that the competitive factor improves to $3.59$.
At the end, the algorithm is generalized to allow a trade-off between the number of processors we use and the energy-efficiency of the resulting schedule.
%
%
%
\end{abstract}


\section{Introduction}

Reducing power consumption and improving energy efficiency has become an important designing requirement in computing systems. For mobile devices, efficient energy management 
can effectively extend the standby period and prolong 
battery lifetime. For large-scale computing clusters, appropriate power-down mechanism for idling processing units can considerably reduce the electricity bill.

%

In order to increase the energy efficiency, two different mechanisms have been introduced 
to reduce the energy consumed for idling periods.
%
\emph{(1) Power-down Mechanism:}
		When a processor is idling, it can be put into a low-power state, e.g., {sleep} or {power-off}. While the processor consumes less energy in these states, a fixed amount of energy is required to switch the system back to work.
		In the literature, the problem of deciding the sequence of state transitions
		is referred to as \emph{dynamic power management}.
%
%
\emph{(2) Dynamic Speed Scaling:} 
		The concept of dynamic speed scaling refers to the flexibility provided by a processor to adjust its processing speed dynamically. The rate of energy consumption is typically described by a convex function of the processing speed. This feature is also referred to as \emph{dynamic voltage frequency scaling}, following its practical implementation scheme.

%

For systems 
that support the power-down mechanism, Baptiste~\cite{Baptiste06} proposed the first polynomial-time algorithm 
to decide the optimal strategy for turning on and powering off the system for aperiodic real-time jobs with unit execution time.
In a follow-up paper, Baptiste et al.~\cite{DBLP:conf/esa/BaptisteCD07} further extended the result to jobs of arbitrary execution time and reduced the time complexity.
%
When strict real-time guarantees are not required, i.e., deadline misses of jobs are allowed, Augustine et al.~\cite{AugustineIS04} considered systems with multiple low-power states and provided online algorithms.
A simplified version of this problem is also known as \emph{ski-rental}~\cite{DBLP:journals/algorithmica/KarlinMMO94}.

Dynamic speed scaling was introduced to allow computing systems to reach a balance between high performance and low energy consumption dynamically. 
Hence, scheduling algorithms that assume dynamic speed scaling, e.g., Yao et al.~\cite{b:Yao95}, usually execute jobs as slowly as possible while ensuring that timing constraints are met. 
%
When the energy required to keep the processor active is not negligible, however, executing jobs too slow may result in more energy consumption. For most realistic power-consumption functions, there exist a \emph{critical speed}, which is most energy-efficient 
for job execution~\cite{ChenK07,b:Sandy03}.

%

Irani et al.~\cite{b:Sandy03} initiated the study of combining both 
mechanisms. For offline energy-minimization, they presented a $2$-approximation. 
For the online version, they introduced a \emph{greedy procrastinating principle}, which enables any online algorithm for speed scaling without power-down mechanism to additionally support the power-down mechanism.
The idea behind this principle is to postpone job execution as much as possible in order to bundle workload for batch execution.
The usage of job procrastination with dynamic speed scaling for period tasks has later been explored extensively in a series of research~\cite{jjlctes06,b:lee03,ChenK07}.

%

The combination of the power-down mechanism with dynamic speed scaling suggests the philosophy of \emph{racing-to-idle}: Execute jobs at higher speeds and gain longer quality sleeping intervals.
In a recent result, Albers and Antoniadis~\cite{DBLP:conf/soda/AlbersA12} show that the problem of minimizing the energy consumption for speed scaling with a sleep state is NP-hard and provide a $\frac{4}{3}$-approximation.

\vspace{-8pt}
\paragraph{Our Focus and Contribution.} 
In this paper, we examine the problem of online dynamic power management
to minimize the energy consumption.
We present
both lower bounds on the problem complexity and algorithmic results.
%

%

First, we show that the competitive factor of any online algorithm for this problem is at least $2.06$.
%
This shows that this problem is already harder than the ski-rental problem, which has a tight competitive factor of $2$ that is inherited by a couple of online scheduling problems as the only known lower bound~\cite{DBLP:journals/tecs/IraniSG03,DBLP:journals/algorithmica/KarlinMMO94}. 

Second, we present a $4$-competitive online algorithm 
that uses at most two processors for any given set of jobs known in advance to be 
packable on a processor.
When the execution times of the jobs are unit, we show that the competitive factor improves to $3.59$.
%
Then, we generalize our algorithm for a prescribed collection of job streams to allow a trade-off between the number of processors we use and the energy-efficiency of the resulting schedule.

%

%

%

Note that, our assumption on the input job set is crucial in the sense that
packing the jobs using a given number of processors is known to be a long-standing difficult problem even for the offline case~\cite{Chuzhoy:2004:MMS:1032645.1033163,Chuzhoy:2009:RMJ:1616497.1616504,springerlink:10.1007/1-4020-8141-318,Garey:1979:CIG:578533}, and for the online version only very special cases were studied~\cite{Kao2012-mmjs,springerlink:10.1007/1-4020-8141-318}. 

%

Due to the space limit, some technical details and proofs are provided in the appendix for further reference.

\section{Notations and Problem Definition}

In this section, we provide definitions to the scheduling model assumed in this paper, followed by a formal problem definition.


\paragraph{The Jobs.} 
Each job
$j$ is associated with three parameters, namely, the arrival time $a_j$, the execution time $c_j$, and the deadline $d_j$.
The arrival time of a job is the moment it arrives to the system and becomes ready for execution.
The execution time is the amount of time it requires to finish its task,
and the deadline is the latest moment 
at which the task must be completed.
%
We assume that $c_j$ and $d_j$ are both known at the moment when $j$ arrives to the system.

For notational brevity, for any job $j$, we use a triple $j=(a_j, d_j, c_j)$ to denote the corresponding parameters of $j$. 
We call a job $j$ a unit job if $c_j=1$ and we write $j=(a_j, d_j)$. Moreover, a job $j$ is said to be \emph{urgent} if
$c_j = d_j - a_j$.


\paragraph{Energy Consumption.}
In this paper, processors for executing jobs are assumed to have three states, namely
\emph{busy, standby,} and \emph{off}. 
When a processor is off, it cannot execute jobs and consumes a negligible amount of energy. 
Switching a processor from off to other states requires $E_w$ units of energy.
%
A processor is in busy state when executing a job. The amount of energy it consumes per unit of time when busy is denoted by $\psi_{b}$.
When a processor is in standby, it consumes $\psi_{\sigma}$ amount of energy per unit of time.
We assume $\psi_{\sigma} \leq \psi_{b}$.
%
For convenience, we use the terminology ``turning on'' and ``turning off'' to denote the transition between the off state and other states.

%

Provided the above notion, the \emph{break-even time}, denoted by $\MC{B}$, is defined as
${E_w} / {\psi_{\sigma}}$. Literally, this corresponds to the amount of time a processor has to stay in standby in order to have the same energy consumption as a turn-on operation.
Break-even time is an important concept that has been
widely used for ski-rental-related problems~\cite{DBLP:journals/algorithmica/KarlinMMO94} and dynamic power management algorithms in the literature, e.g.,~\cite{DBLP:journals/tecs/IraniSG03,b:Sandy03,DBLP:journals/rts/HuangSCTB11}. 






\paragraph{Job Scheduling.}
Let $\MC{J}$ be a set of jobs. A schedule $\MB{S}$ for $\MC{J}$ on a set of processors $\MC{M}$ is to decide for each processor $m \in \MC{M}$:
		(1) a set $\MC{I}_m$ of time intervals during which processor $m$ is turned on, and
%
		(2) a function $\OP{job}_m(t)\colon \MBB{R}^+ \rightarrow \MC{J}$ of time indicating the job to occupy processor $m$ at time $t$.
%
The schedule $\MB{S}$ is said to be \emph{feasible} if for each job $j \in \MC{J}$, there exist a processor $m \in \MC{M}$ such that
$$\sum_{\OP{I} \in \MC{I}_m}\int_{\OP{I} \cap [a_j,d_j]} \delta\BIGP{\OP{job}_m(t), j}\cdot\OP{d}t \ge c_j,$$
where $\delta(x,y) = 1$ if $x=y$ and $\delta(x,y) = 0$ otherwise.
%
%
%
%
The energy consumption of the schedule $\MB{S}$, denoted $\OP{E}(\MB{S})$, is hence
$$\OP{E}(\MB{S}) = \sum_{m\in \MC{M}}\BIGP{ E_w\cdot\BIGC{\MC{I}_m} + \int_{\OP{I} \in \MC{I}_m} \BIGP{\enskip E_b - \delta(\OP{job}_m(t),\phi)\cdot(E_b-E_s) \enskip} \cdot \OP{d}t }.$$
The goal of 
the \emph{Power-Minimizing Scheduling Problem} is to find a feasible schedule $\MB{S}$ such that $\OP{E}(\MB{S})$ is minimized.
%

%



%
In this paper, we consider the case where the jobs are arriving to the system dynamically in an online setting, i.e., at any time $t$, 
we can only see the jobs whose arrival times are less than or equal to $t$, 
and the scheduling decisions have to be made without prior knowledge on future job arrivals.
To be more precise, let $\MC{J}$ be the input job set and
$\MC{J}(t) = \BIGBP{j \colon j\in \MC{J}, a_j \le t}$
be the subset of $\MC{J}$ 
that consists of the jobs whose arrival times are no greater than $t$.

\begin{definition}[Online Power-Minimizing Scheduling]
For any given set $\MC{J}$ of jobs,
the \emph{online power-minimizing scheduling} problem is to compute a feasible schedule
such that the energy consumed up to time $t$ is small with respect to $OPT(\MC{J}(t))$, where $OPT(\MC{J}(t))$ is the energy consumed by an optimal schedule of $\MC{J}(t)$, for any $t \ge 0$.
\end{definition}

%
%





%

\paragraph{The Schedulability of the Jobs.}

Chetto et al.~\cite{DBLP:journals/ipl/ChettoC89} studied the schedulability of any given set of jobs and proved the following lemma.
\begin{lemma}[Chetto et al.~\cite{DBLP:journals/ipl/ChettoC89}] \label{lemma_offline_condition}
For any set $\MC{J}$ of jobs, $\MC{J}$ can be scheduled on one processor using the \emph{earliest-deadline-first (EDF)} principle, which always selects the job with earliest deadline for execution at any moment, if and only if the following condition holds:
\begin{equation}
\text{For any time interval $(\ell, r)$, we have} \quad \sum_{j\colon j\in \MC{J}, \ell \le a_j, d_j \le r} c_j \le r - \ell. \label{eq-edf}
\end{equation}
\end{lemma}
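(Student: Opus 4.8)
The plan is to establish the two implications separately. The forward direction (EDF-feasibility implies~\eqref{eq-edf}) is a short counting argument, while the reverse direction (condition~\eqref{eq-edf} implies EDF succeeds) carries the real content and is where I would spend most of the effort.

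For the ``only if'' direction, I would observe that an EDF schedule is in particular a feasible schedule, and that \emph{any} feasible schedule must, for every interval $(\ell,r)$, execute each job $j$ with $\ell \le a_j$ and $d_j \le r$ entirely inside $[a_j,d_j] \subseteq [\ell,r]$, since such a job can run neither before its arrival nor after its deadline. Because one processor supplies only $r-\ell$ units of processing time over $(\ell,r)$, summing the execution times of all such jobs yields $\sum_{j \colon \ell \le a_j, d_j \le r} c_j \le r-\ell$, which is exactly~\eqref{eq-edf}.

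For the ``if'' direction, I would argue by contradiction: assume~\eqref{eq-edf} holds but EDF misses a deadline, let $r$ be the earliest missed deadline, and let $j^\ast$ be a job with $d_{j^\ast}=r$ left incomplete at time $r$. Define $\ell$ as the latest time $\le r$ at which the processor is idle or is running a job of deadline strictly greater than $r$; such a time exists (e.g.\ before any job of deadline $\le r$ has arrived). By maximality of $\ell$, the processor is continuously busy throughout $(\ell,r)$ executing jobs of deadline $\le r$. The key structural claim is that every job executed during $(\ell,r)$ has arrival time $\ge \ell$: at $\ell$ no job of deadline $\le r$ is pending --- otherwise EDF, which always picks the smallest deadline, would be running it rather than being idle or running a later-deadline job --- so any job run afterward in $(\ell,r)$ must have arrived at or after $\ell$; in particular $a_{j^\ast} \ge \ell$. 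Then all $r-\ell$ units of work done in $(\ell,r)$ go to jobs counted in the sum $\sum_{j \colon \ell \le a_j, d_j \le r} c_j$, and since $j^\ast$ belongs to this set yet receives strictly fewer than $c_{j^\ast}$ units of service before its deadline $r$, we obtain $r-\ell < \sum_{j \colon \ell \le a_j, d_j \le r} c_j$, contradicting~\eqref{eq-edf}. Hence EDF meets every deadline and the schedule is feasible.

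The hard part will be making the definition of $\ell$ and the claim ``no job of deadline $\le r$ is pending at time $\ell$'' precise in the continuous-time model: one has to be careful about boundary instants (jobs arriving exactly at $\ell$, and the processor's state at $\ell$ versus immediately before $\ell$), and to confirm that the work executed on $(\ell,r)$ is accounted for exactly once in the final sum so that the strict inequality is legitimate. Once this ``busy-period'' structure around the first missed deadline is pinned down, the remaining bookkeeping --- that the execution of $j^\ast$ lies wholly within $(\ell,r)$ and that its service deficit forces strictness --- is routine.
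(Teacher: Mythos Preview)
The paper does not prove this lemma at all; it is stated as a result of Chetto et al.\ and simply cited. There is therefore nothing to compare against on the paper's side.

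That said, your argument is the standard and correct proof of this classical EDF feasibility criterion. The forward direction is exactly the obvious counting bound, and for the converse your busy-period construction around the first missed deadline (take $\ell$ to be the last instant before $r$ at which the processor is idle or working on a job with deadline $>r$, then observe all work done in $(\ell,r)$ comes from jobs with $\ell \le a_j$ and $d_j \le r$) is precisely the textbook approach. Your caveats about boundary instants in continuous time are well placed but routine to resolve; one clean way is to take $\ell$ as the supremum of such times and argue that the set of ``bad'' times is open on the right, so that on $(\ell,r)$ the processor is continuously executing deadline-$\le r$ work.
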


%
Note that, it is well-known that, for any set $\MC{J}$ of jobs, if there exists a feasible schedule for $\MC{J}$ that uses only one processor, then the EDF principle is guaranteed to produce 
a feasible schedule~\cite{DBLP:conf/ifip/Dertouzos74}.
Therefore, Condition~(\ref{eq-edf}) gives a necessary and sufficient condition for any set of jobs to be able to be packable on a processor.

%



\section{Problem Lower Bound}

In this section, we prove a lower bound of $2.06$ on the competitive factor of any online algorithm by designing an online adversary $\MC{A}$ 
that observes the behavior of the scheduling algorithm to determine the forthcoming job sequence.
%

%

Let $\Pi$ be an online scheduling algorithm for this problem. 
%
%
We set 
$\psi_{b} = \psi_{\sigma} = \psi=1$ and $E_w = k$, where $k$ is an integer chosen to be sufficiently large. Hence the break-even time $\MC{B}$ is also $k$. 
%
Without loss of generality, we assume 
a length of the minimum tick 
of the system to be $\epsilon_0$, 
 which we further assume to be $1$.
%
We define a \emph{monitor} operation of the adversary $\MC{A}$ as follows.

\begin{definition}
When $\MC{A}$ {\bf monitors}
$\Pi$ during time interval $[t_0,t_1]$, it checks if $\Pi$ keeps at least one processor on between time $t_0$ and $t_1$. If $\MC{S}$ turns off all the processors at some point $t$ between $t_0$ and $t_1$,
then $\MC{A}$ releases an urgent job of length $\epsilon_0$ immediately at time $t+\epsilon_0$, forcing $\Pi$ to turn on at least one processor to process it.
If $\Pi$ keeps at least one processor on during the monitored period, then $\MC{A}$ does nothing.
\end{definition}

%

Let $x$, $\eta$, and $\chi$, where $0\le x\le \frac{2}{5}$, be three non-negative parameters to be chosen carefully.
The online adversary works as follows.  At the beginning, say, at time
$0$, $\MC{A}$ releases a unit job $(0,\MC{B},\epsilon_0)$ and observes the
behavior of $\Pi$.  Let $t$ be the moment at which $\Pi$
schedules this job to execute. 
Since $\Pi$ produces a feasible schedule, we know that $0 \leq t \leq \MC{B-1}$.
We have the following two cases.

\smallskip

\noindent\textbf{Case(1):}
	If $0 \leq t \le \BIGP{\frac{1}{2}-x}\MC{B}$, then $\MC{A}$ monitors $\Pi$ from time $t$ to $\frac{3}{2}\MC{B}$.
%
%
\\
\textbf{Case(2):}
	If $j$ is not executed till $\BIGP{\frac{1}{2}-x}\MC{B}$, $\MC{A}$ releases	$\BIGP{\frac{1}{2}+x}\MC{B}-\epsilon_0$ unit jobs with absolute deadline $\MC{B}$ at time $\BIGP{\frac{1}{2}-x}\MC{B}+\epsilon_0$.
	As a result, 
	the online algorithm is forced to wake up at least 
	two processors to meet the deadlines of the jobs.
	The adversary continues to monitor $\Pi$ until time $\BIGP{\frac{3}{2}+\eta}\MC{B}$.
	If no urgent unit jobs have been released till time $\BIGP{\frac{3}{2}+\eta}\MC{B}$, $\MC{A}$ terminates.
	Otherwise, it monitors $\Pi$ for another $\chi\MC{B}$ units of time until $\BIGP{\frac{3}{2}+\eta+\chi}\MC{B}$.

%
\smallskip

Let $\MC{E}(\Pi)$ and $\MC{E}(\MC{O})$ denote the energy consumed by
algorithm $\Pi$ and an offline optimal schedule on the input
sequence generated by $\MC{A}$, respectively.
%
The following lemmas give a lower bound on the ratio $\MC{E}(\Pi) / \MC{E}(\MC{O})$ for each of the aforementioned cases, expressed as a function of $k$, $x$, $\eta$, and $\chi$.

\begin{lemma} \label{lemma:case1-lower}
If the first job released by $\MC{A}$ is scheduled at time $t$ with $0 \leq t \le \BIGP{\frac{1}{2}-x}\MC{B}$, then we have
\vspace{-18pt}
\begin{align*}
\hspace{96pt}
\frac{\MC{E}(\Pi)}{\MC{E}(\MC{O})} \ge 2+\frac{1}{2}x-{O}\BIGP{\frac{1}{k}}, \quad \text{for any $0 \le x \le \frac{2}{5}$}.
\end{align*}
\end{lemma}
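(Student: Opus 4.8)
The plan is to lower-bound $\MC{E}(\Pi)$ from below and upper-bound $\MC{E}(\MC{O})$ from above separately, using the structure of Case~(1). Recall that in this case $\Pi$ executes the first job $(0,\MC{B},\epsilon_0)$ at some time $t \le \BIGP{\frac12 - x}\MC{B}$, and then $\MC{A}$ monitors $\Pi$ on the interval $[t, \frac32\MC{B}]$. First I would argue that, because of the monitoring, $\Pi$ must keep at least one processor on during the \emph{entire} interval $[t, \frac32\MC{B}]$: if it ever turned everything off at some point $t'$ in this range, the adversary would immediately release an urgent unit job at $t'+\epsilon_0$, which forces a processor to be on to meet that deadline; iterating this (a processor on at $t'$, turned off, urgent job, processor on again, $\ldots$) shows $\Pi$ pays at least as much as keeping a processor continuously on. With $\psi_b=\psi_\sigma=\psi=1$, the energy of keeping one processor on for the length-$\BIGP{\frac32\MC{B}-t}$ window is exactly $\frac32\MC{B}-t$, plus the $E_w = \MC{B}$ cost of the (at least one) turn-on that happened at or before time $t$. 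Since $t \le \BIGP{\frac12 - x}\MC{B}$, this gives
\begin{equation*}
\MC{E}(\Pi) \;\ge\; \MC{B} + \BIGP{\tfrac32\MC{B} - t} \;\ge\; \MC{B} + \MC{B} + x\MC{B} \;=\; \BIGP{2 + x}\MC{B}.
\end{equation*}

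Next I would bound $\MC{E}(\MC{O})$. The offline optimum sees the whole instance, which in Case~(1) consists of the initial unit job $(0,\MC{B},\epsilon_0)$ together with whatever urgent $\epsilon_0$-jobs the monitor happened to trigger. Crucially, the optimal strategy is to \emph{not} be the one that triggers them: schedule the first unit job as late as possible, i.e.\ at time $\MC{B}-\epsilon_0$, keeping the processor off until then. Under this schedule $\MC{O}$ never turns off inside a monitored window after having turned on, so no urgent jobs are created against it; $\MC{O}$ simply turns on once near time $\MC{B}-\epsilon_0$, runs the unit job, and turns off. That costs $E_w + \epsilon_0\cdot\psi_b = \MC{B} + 1$. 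Hence $\MC{E}(\MC{O}) \le \MC{B} + 1 = \MC{B}\BIGP{1 + \frac1k}$, using $\MC{B}=k$. I should double-check the edge case where $\Pi$'s own behavior in Case~(1) does force some urgent jobs (because $\Pi$ turned off inside $[t,\frac32\MC{B}]$); but in that subcase the argument of the previous paragraph already gave $\MC{E}(\Pi)\ge(2+x)\MC{B}$ directly, and $\MC{O}$ can still absorb a few length-$\epsilon_0$ urgent jobs by staying on between them at negligible extra cost $O(\epsilon_0)=O(1)$, so the bound on $\MC{E}(\MC{O})$ only worsens by an additive $O(1) = O\BIGP{\MC{B}/k}$ term.

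Combining the two bounds,
\begin{equation*}
\frac{\MC{E}(\Pi)}{\MC{E}(\MC{O})} \;\ge\; \frac{\BIGP{2+x}\MC{B}}{\MC{B}\BIGP{1 + \frac1k}} \;=\; \BIGP{2+x}\cdot\frac{1}{1+\frac1k} \;=\; \BIGP{2+x}\BIGP{1 - O\BIGP{\tfrac1k}} \;\ge\; 2 + x - O\BIGP{\tfrac1k},
\end{equation*}
which is even a bit stronger than the claimed $2 + \frac12 x - O(1/k)$; the weaker constant $\frac12$ in the statement presumably leaves room for a cruder accounting of the turn-on/turn-off bookkeeping or for consistency with the constants chosen in the other case, so I would simply verify that my cleaner bound implies the stated one. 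The main obstacle I anticipate is the first paragraph: making fully rigorous the claim that the monitoring operation really does pin down at least $\MC{B} + (\frac32\MC{B}-t)$ units of energy for $\Pi$ — in particular handling the possibility that $\Pi$ turns off and on several times inside $[t,\frac32\MC{B}]$, and checking that each such cycle costs $\Pi$ at least as much as the ``stay on'' baseline (a turn-on costs $E_w=\MC{B}$, whereas staying on for the same gap costs only the gap length $\le\frac12\MC{B}$, so every extra cycle strictly hurts $\Pi$); this is a short exchange/amortization argument but it is where the care is needed.
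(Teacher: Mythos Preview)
Your argument has a genuine gap in the upper bound on $\MC{E}(\MC{O})$. The instance that $\MC{O}$ must schedule is determined entirely by $\Pi$'s behaviour, not by $\MC{O}$'s: the adversary watches $\Pi$, and every time $\Pi$ turns off inside $[t,\tfrac32\MC{B}]$ an urgent $\epsilon_0$-job is added to the instance. The offline optimum then has to schedule \emph{all} of these jobs. Your sentence ``under this schedule $\MC{O}$ never turns off inside a monitored window \ldots\ so no urgent jobs are created against it'' is a misreading of the model, and your fallback claim that $\MC{O}$ can ``absorb a few length-$\epsilon_0$ urgent jobs \ldots\ at negligible extra cost $O(\epsilon_0)=O(1)$'' is false. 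An urgent job must be executed at its arrival time, and that arrival time can lie anywhere in $(t,\tfrac32\MC{B}]$; in particular it can land after the deadline $\MC{B}$ of the first job. If, say, $\Pi$ turns off once near $\tfrac32\MC{B}$, then $\MC{O}$ faces one job with deadline $\MC{B}$ and one urgent job at $\approx\tfrac32\MC{B}$, and the cheapest feasible schedule costs $E_w+\tfrac12\MC{B}+O(1)=\tfrac32\MC{B}+O(1)$, not $\MC{B}+O(1)$. With your (correct but loose) lower bound $\MC{E}(\Pi)\ge(2+x)\MC{B}$ the ratio then drops to roughly $(2+x)/\tfrac32<2$, which does not prove the lemma.

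This is exactly why the paper's proof splits on $m$, the number of jobs actually released. For each $m$ it uses the sharper lower bound $\MC{E}(\Pi)\ge mE_w+(\tfrac32\MC{B}-t)-(m-1)$, which accounts for the $m$ turn-ons, and pairs it with a carefully constructed feasible offline schedule whose cost grows with $m$. The minimum of the resulting ratios over $m\in\{1,2,3,\ge 4\}$ is attained at $m=3$ and equals $2+\tfrac12x-O(1/k)$. So the constant $\tfrac12$ in the statement is not slack left ``for consistency''; it is the true bottleneck of this adversary, and your stronger conclusion $2+x$ only holds in the special case $m=1$. To fix your proof you need to carry out this case analysis (or an equivalent argument that simultaneously tracks how many turn-ons $\Pi$ pays and how spread out the resulting urgent jobs can be for $\MC{O}$).
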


\begin{lemma} \label{lemma:case2-lower}
If the first job released by the adversary is not executed until time $\BIGP{\frac{1}{2}-x}\MC{B}$, then for any $x,\eta,\chi \ge 0$ we have
$$\frac{\MC{E}(\Pi)}{\MC{E}(\MC{O})} \ge \min\BIGBP{\frac{3+x+\eta}{\frac{3}{2}+x},\frac{4+x+\eta+\chi}{2+x+\eta},\frac{5+x+\eta+\chi}{2+x+\eta+\chi}}-{O}\BIGP{\frac{1}{k}}.$$
\end{lemma}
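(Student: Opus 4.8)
The plan is to analyze Case (2) by carefully accounting for the energy spent by $\Pi$ and by the offline optimum $\MC{O}$ on the adversary's input, splitting according to how many monitoring rounds actually trigger an urgent job. Recall that in Case (2) the adversary first releases the unit job $(0,\MC{B},\epsilon_0)$ which $\Pi$ does not run before $(\frac12-x)\MC{B}$, then at time $(\frac12-x)\MC{B}+\epsilon_0$ dumps $(\frac12+x)\MC{B}-\epsilon_0$ unit jobs with deadline $\MC{B}$. Since the total work $(\frac12+x)\MC{B}$ must be finished in the window of length $x\MC{B}+\epsilon_0$ remaining before $\MC{B}$, Condition~(\ref{eq-edf}) forces $\Pi$ to run at least two processors simultaneously during (a large part of) $[(\frac12-x)\MC{B},\MC{B}]$; this is where the extra cost over ski-rental comes from. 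The offline optimum, knowing the batch is coming, can schedule all of these jobs plus the first job on two processors inside $[(\frac12-x)\MC{B},\MC{B}]$ — or better, recognize that the first job can be absorbed — so $\MC{O}$'s workload is essentially fixed; what varies is how long each side must additionally stay awake because of the monitoring.

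First I would establish the baseline costs. For $\MC{O}$: it turns on (cost $\MC{B}$, using $E_w=k=\MC{B}$), runs the batch on two processors over an interval of length $\approx(\frac12+x)\MC{B}/1$ split appropriately, i.e. it pays roughly $\frac32\MC{B}+x\MC{B}$ of running/standby energy plus one or two turn-ons; the dominant term is $(\frac32+x)\MC{B}$ up to llower-order $O(1)$ terms, which become $O(1/k)$ after dividing. I should be slightly careful about whether $\MC{O}$ needs one turn-on or two (two processors), but since $E_w=\MC{B}$ each turn-on is already order $\MC{B}$, so I want to argue $\MC{O}$ can do everything with the cheapest number of wake-ups — and in fact with a single extra processor woken only for the short overlap, the second processor's turn-on contributes to the denominator and must be counted honestly. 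The safest route is: $\MC{O}$ pays exactly one wake-up on processor 1 covering $[(\frac12-x)\MC{B},\MC{B}]$ (length $(\frac12+x)\MC{B}$, energy $\psi\cdot(\frac12+x)\MC{B}$) plus $E_w=\MC{B}$, and one wake-up on processor 2 covering a sub-interval of length $x\MC{B}$ plus $E_w=\MC{B}$; totalling $\MC{E}(\MC{O}) = 2\MC{B} + (\frac12+2x)\MC{B} + O(1) = (\frac52 + 2x)\MC{B} + O(1)$. Hmm — I realize the three target ratios have denominators $\frac32+x$, $2+x+\eta$, $2+x+\eta+\chi$, so the intended $\MC{E}(\MC{O})$ is $(\frac32+x)\MC{B}+O(1)$ in the first subcase, meaning $\MC{O}$ uses essentially one processor-on-interval of length $(\frac12+x)\MC{B}$ plus one wake-up, and the "two processors" are packed into that same interval so the second processor is on only within it and is counted — but its turn-on $E_w=\MC{B}$ would push the denominator up. The resolution must be that the adversary's parameter choice makes $x$ small and the honest accounting still yields the stated bound; I will recompute $\MC{E}(\MC{O})$ exactly and take the minimum over the subcases, exactly as the lemma's $\min$ reflects.

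Then I would handle $\Pi$'s cost in each of the three subcases, driven by the monitoring. Subcase A: $\Pi$ stays on the whole time but no urgent job is ever released (adversary terminates at $(\frac32+\eta)\MC{B}$ having seen no turn-off); then $\Pi$ has been continuously awake since time $0$ — it paid one wake-up $\MC{B}$ and was on for $(\frac32+\eta)\MC{B}$, and moreover ran two processors during the batch, adding another $\MC{B}$ wake-up plus $x\MC{B}$ of on-time; so $\MC{E}(\Pi) \ge (3+x+\eta)\MC{B} + O(1)$ against $\MC{E}(\MC{O}) = (\frac32+x)\MC{B}+O(1)$, giving the first term. Subcase B: $\Pi$ turns everything off at some point before $(\frac32+\eta)\MC{B}$, triggering an urgent job; the adversary then monitors another $\chi\MC{B}$. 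If $\Pi$ survives this extra window without another turn-off, I count: the initial wake-up and on-time up to the turn-off, the forced re-wake for the urgent job, the on-time through $(\frac32+\eta+\chi)\MC{B}$, and the batch's second processor; the worst case over when $\Pi$ turns off gives $\MC{E}(\Pi)\ge (4+x+\eta+\chi)\MC{B}+O(1)$ while $\MC{O}$ now can also exploit the longer horizon, $\MC{E}(\MC{O}) = (2+x+\eta)\MC{B}+O(1)$, yielding the second term. Subcase C: $\Pi$ turns off again during the $\chi\MC{B}$ window, forcing yet another urgent job and wake-up: $\MC{E}(\Pi)\ge(5+x+\eta+\chi)\MC{B}+O(1)$ versus $\MC{E}(\MC{O})=(2+x+\eta+\chi)\MC{B}+O(1)$, the third term. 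Taking the worst (smallest) of the three ratios — since the adversary commits to Case (2) but $\Pi$ chooses which subcase it lands in — gives the $\min$, and all the additive $O(1)$'s become $O(1/k)$ after dividing by $\MC{B}=k$.

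The main obstacle I anticipate is the bookkeeping in Subcases B and C: pinning down, uniformly over the moment $t'$ at which $\Pi$ turns everything off, a clean lower bound on $\Pi$'s total on-time plus wake-up count, and simultaneously upper-bounding $\MC{O}$'s cost over the matching (possibly longer) horizon without accidentally handing $\MC{O}$ a cheaper schedule than it can actually achieve — in particular making sure the forced two-processor overlap during the batch is charged to $\Pi$ but that the monitor-triggered urgent jobs of length $\epsilon_0$ contribute only $O(1)$ and not more. A secondary subtlety is verifying, via Condition~(\ref{eq-edf}) applied to the interval $\bigl((\frac12-x)\MC{B},\MC{B}\bigr)$, that $\Pi$ genuinely cannot avoid a second processor (or an equivalent early start that would itself have cost more), so that the "$+x$" and the extra "$+1$" wake-up in $\Pi$'s cost are unavoidable; I would phrase this as: the total released work with arrival $\ge(\frac12-x)\MC{B}$ and deadline $\le\MC{B}$ is $(\frac12+x)\MC{B}$, exceeding the interval length $x\MC{B}$ for $x<\frac16$, so at least two processors run in that window. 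Once these two points are secured, the remaining algebra is the routine division that produces the three fractions in the $\min$.
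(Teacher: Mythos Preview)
Your three-subcase decomposition (no urgent job in the first monitoring window; one triggered there but none in the second; one triggered in each) matches the paper's argument exactly, and the numerators $(3+x+\eta)$, $(4+x+\eta+\chi)$, $(5+x+\eta+\chi)$ are right for the same reasons you outline. The genuine gap is in your treatment of $\MC{O}$.

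You repeatedly assume the offline optimum must also wake a second processor for the batch, and you even arrive at $\MC{E}(\MC{O})\approx(\tfrac52+2x)\MC{B}$ before noticing the mismatch with the target denominators. The resolution you are missing is that $\MC{O}$ needs \emph{only one} processor. The first job $(0,\MC{B},\epsilon_0)$ arrived at time $0$, so the offline schedule can start it at time $(\tfrac12-x)\MC{B}$, exactly one tick \emph{before} the batch arrives; then the remaining $(\tfrac12+x)\MC{B}-1$ unit jobs fit in $[(\tfrac12-x)\MC{B}+1,\MC{B}]$ on that same processor. Hence in the first subcase $\MC{E}(\MC{O})\le E_w+(\tfrac12+x)\MC{B}\psi=(\tfrac32+x)\MC{B}$, and in the later subcases $\MC{O}$ simply keeps that single processor on through $(\tfrac32+\eta)\MC{B}$ or $(\tfrac32+\eta+\chi)\MC{B}$, giving denominators $(2+x+\eta)$ and $(2+x+\eta+\chi)$. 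The whole point of the construction is that $\Pi$, having already committed to \emph{not} running the first job by time $(\tfrac12-x)\MC{B}$, is one unit of work behind at time $(\tfrac12-x)\MC{B}+\epsilon_0$ and therefore cannot finish everything by $\MC{B}$ on one processor, whereas $\MC{O}$ can.

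Two smaller corrections. First, your Condition~(\ref{eq-edf}) check at the end is garbled: the interval $\bigl((\tfrac12-x)\MC{B},\MC{B}\bigr)$ has length $(\tfrac12+x)\MC{B}$, not $x\MC{B}$, and the first job has arrival time $0$, so it does not belong to that window; the reason $\Pi$ needs two processors is simply that at time $(\tfrac12-x)\MC{B}+\epsilon_0$ it has $(\tfrac12+x)\MC{B}$ units of unfinished work due by $\MC{B}$ but only $(\tfrac12+x)\MC{B}-\epsilon_0$ time. Second, in Subcase~A you assert $\Pi$ has been ``continuously awake since time $0$''; this is unjustified, since $\Pi$ may not have turned anything on before $(\tfrac12-x)\MC{B}$. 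The safe lower bound the paper uses is $2E_w$ for the two wake-ups, plus $(\tfrac12+x)\MC{B}\psi$ for the total work, plus $(\tfrac12+\eta)\MC{B}\psi$ for keeping a processor on during the monitoring window $[\MC{B},(\tfrac32+\eta)\MC{B}]$.
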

By combining Lemma~\ref{lemma:case1-lower} and Lemma~\ref{lemma:case2-lower}
we obtain the following theorem. 
%

\begin{theorem} \label{thm_sparse_lower_bound}
The competitive factor of any online algorithm for the online
power-minimizing 
scheduling problem is at least $2.06$.
\end{theorem}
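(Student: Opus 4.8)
The plan is to combine Lemma~\ref{lemma:case1-lower} and Lemma~\ref{lemma:case2-lower} into a single lower bound by choosing the free parameters $x$, $\eta$, and $\chi$ so as to maximize the worst case over the two cases. Since the adversary $\MC{A}$ commits to Case(1) or Case(2) based only on the observed behavior of $\Pi$ (namely whether the first job is executed before time $\BIGP{\frac{1}{2}-x}\MC{B}$), the competitive factor of $\Pi$ is at least the minimum of the two bounds for whatever $x,\eta,\chi$ we pick; we are then free to optimize this minimum over the admissible parameter range $0\le x\le \frac{2}{5}$, $\eta,\chi\ge 0$. Taking $k\to\infty$ kills the $O(1/k)$ error terms, so it suffices to show that there is a choice of parameters making
\begin{equation*}
\min\BIGBP{\,2+\tfrac{1}{2}x,\ \frac{3+x+\eta}{\frac{3}{2}+x},\ \frac{4+x+\eta+\chi}{2+x+\eta},\ \frac{5+x+\eta+\chi}{2+x+\eta+\chi}\,}\ \ge\ 2.06.
\end{equation*}

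First I would reduce the number of parameters. The first term $2+\frac12 x$ is increasing in $x$, so we want $x$ as large as possible; but each of the three Case(2) terms is essentially decreasing in $x$ (the numerator and denominator each pick up an $x$, but the denominators are larger, so the ratios shrink), so there is a genuine trade-off and the optimum will balance the Case(1) term against the binding Case(2) term. Similarly, increasing $\eta$ helps the first Case(2) term but hurts the other two, and increasing $\chi$ helps the middle Case(2) term and hurts the last one, so at the optimum I expect several of these expressions to be equal. The natural ansatz is therefore to set the second, third, and fourth expressions all equal to a common value $\rho$ and solve for $\eta,\chi$ as functions of $x$ and $\rho$, then impose $2+\frac12 x = \rho$ as well and solve the resulting system; this pins down a specific numerical $(x,\eta,\chi)$ and the value $\rho$, which one then checks is $\ge 2.06$.

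Concretely, from $2+\frac12 x=\rho$ we get $x=2(\rho-2)$. From $\frac{3+x+\eta}{\frac32+x}=\rho$ we get $\eta=\rho(\frac32+x)-3-x=(\rho-1)x+\frac32\rho-3$. From $\frac{5+x+\eta+\chi}{2+x+\eta+\chi}=\rho$ we get (writing $s=x+\eta+\chi$) $\frac{5+s}{2+s}=\rho$, i.e. $s=\frac{5-2\rho}{\rho-1}$, and from $\frac{4+x+\eta+\chi}{2+x+\eta}=\rho$ we get $4+s = \rho(2+x+\eta)$. Substituting the expression for $x+\eta$ in terms of $x,\rho$ into the last equation, together with $x=2(\rho-2)$, yields a single polynomial equation in $\rho$; its relevant root is the claimed competitive factor. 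I would then simply verify numerically that this root exceeds $2.06$ and that the corresponding $x$, $\eta$, $\chi$ are nonnegative with $x\le\frac25$, and conclude that for $k$ large enough the ratio $\MC{E}(\Pi)/\MC{E}(\MC{O})$ exceeds $2.06$ for every online algorithm $\Pi$.

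The main obstacle is not conceptual but bookkeeping: correctly setting up which of the four expressions are simultaneously tight at the optimum (a priori one of the three Case(2) terms could be slack), and then carefully solving the resulting small algebraic system and checking feasibility of the parameters. One should also double-check the monotonicity claims used to argue that the optimum indeed balances Case(1) against Case(2) rather than sitting at a boundary such as $x=0$ or $x=\frac25$; if the unconstrained optimum fell outside $[0,\frac25]$ one would instead evaluate at the boundary, but the bound $2.06$ suggests the interior critical point is the operative one. Once the parameters are fixed, the inequality $\ge 2.06$ is a finite numerical check, and letting $k\to\infty$ completes the proof.
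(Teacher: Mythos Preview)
Your proposal is correct and follows essentially the same approach as the paper: combine the two case lemmas into the four-term minimum, optimize over $x,\eta,\chi$ by equalizing the terms, and let $k\to\infty$. The paper is terser---it simply records the optimizing values $x\approx 0.1218$, $\eta\approx 0.2206$, $\chi\approx 0.4852$ (which you can check make all four expressions $\approx 2.061$)---whereas you spell out the balancing ansatz that produces them; one small slip is your parenthetical ``the denominators are larger,'' which is backwards (the ratios exceed $1$, so adding the same $x$ to numerator and denominator pulls them toward $1$), but your conclusion about the monotonicity in $x$ is correct.
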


\section{Online Scheduling}

We have seen in the last section that, in order to 
bundle the execution of the jobs while providing strict real-time guarantees, additional 
processors
are necessary compared to 
those required by an optimal offline schedule.
%

\smallskip

In this section, we first assume that Condition~(\ref{eq-edf}) from Lemma~\ref{lemma_offline_condition} holds for the input set of jobs and present an online strategy that gives an energy-efficient schedule using at most two processors.
%
In \S~\ref{sec-trade-off}, we generalize our algorithm for a prescribed collection of job streams, each of which delivers a set of jobs satisfying Condition~(\ref{eq-edf}), to allow a trade-off between the number of processors we use and the energy-efficiency of the resulting schedule.


\smallskip

%
We begin our discussion with a review on the commonly used approaches and their drawbacks in our problem model.

\paragraph{Common Approaches and Bad Examples.}

A commonly used approach to bundling the workload is to delay the execution of the jobs as long as possible until no more space for delaying the job execution is left, followed by using the earliest-deadline-first principle to schedule the jobs, e.g., the algorithm due to Irani et al.~\cite{b:Sandy03}.
When there is no more job to execute, the ski-rental problem and related scheduling problems~\cite{DBLP:journals/algorithmica/KarlinMMO94} suggest that we stay in standby for $\MC{B}$ amount of time before turning off the processor.
%
Let $\MC{L}$ denote this approach.
%


%

\begin{lemma} \label{lemma_ltr_lower_bound} 
The competitive factor of $\MC{L}$ can be arbitrarily large.
Furthermore, even when $c_j = 1$ for all $j \in \MC{J}$, the competitive factor of $\MC{L}$ is still at least $6$.
\end{lemma}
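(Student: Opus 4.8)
The plan is to prove both statements by exhibiting explicit adversarial job sequences, each satisfying Condition~(\ref{eq-edf}), on which $\MC{L}$ is driven into wasteful behaviour while an offline optimum stays cheap. The mechanism I want to exploit is intrinsic to ``delay as long as possible'': by postponing execution, $\MC{L}$ (i) forfeits the chance to run a job during an interval in which a processor is already awake, and (ii) commits to executing deferred jobs near their deadlines, so that a later, (eq-edf)-compatible arrival of urgent jobs can turn the deferred backlog into a conflict that $\MC{L}$ can only resolve by waking further processors. The offline optimum, which sees the whole instance, instead runs every deferred job early --- inside an already-active busy or standby window --- and keeps everything on a single processor; this is feasible precisely because Condition~(\ref{eq-edf}) holds (Lemma~\ref{lemma_offline_condition}). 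Throughout I will normalise $E_w$ to be large and use the identity $\MC{B}\psi_\sigma = E_w$, so that a spurious wake-up and the break-even standby interval that typically precedes a power-down each cost about $E_w$, whereas the common busy-time term $\psi_b W$ (identical for $\MC{L}$ and $\MC{O}$, since there is no speed scaling) can be made negligible by keeping the total workload $W$ small relative to $E_w$.

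For the first (unbounded) claim I would build, for each target $n$, an instance containing $n$ ``slack'' jobs released near time $0$ whose latest start times are all forced to coincide at a single moment $\tau$, together with the minimum amount of additional (urgent) work needed to realise this coincidence under Condition~(\ref{eq-edf}). Procrastination then makes $\MC{L}$ attempt to execute all $n$ slack jobs starting at $\tau$, which it can only do by powering up $\Theta(n)$ processors, so that $\MC{E}(\MC{L}) = \Omega(n)\cdot E_w$ from wake-ups alone (plus a standby interval of length $\MC{B}$, i.e. another $E_w$, on each processor). An offline schedule runs the slack jobs one after another on one processor and powers down once, so $\MC{E}(\MC{O}) = O(E_w) + \psi_b W$; keeping $W$ small makes $\MC{E}(\MC{L})/\MC{E}(\MC{O}) = \Omega(n)$, which is unbounded. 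The step that needs care is engineering the coincidence of latest start times while keeping every window $(\ell,r)$ within its budget $r-\ell$ in~(\ref{eq-edf}); this is what forces the slack jobs to be staggered in arrival time and sized so that the total work in any deadline-prefix stays tiny.

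For the second claim the unit-execution-time restriction rules out such pile-ups, so I would instead chain together many identical ``rounds.'' In each round the adversary releases a unit job that $\MC{L}$ must execute immediately --- a wake-up --- together with a unit job whose deadline is placed just past one break-even period $\MC{B}$; $\MC{L}$ runs the first job, sits in standby through a full interval of length $\MC{B}$, and is then forced to run the deferred unit job exactly as it was about to power down, after which it again idles for $\MC{B}$ and powers down, together with a small companion gadget (a job and a timed urgent unit job) that additionally forces a second processor awake in the round. Summing the per-round overhead of $\MC{L}$ --- a few wake-ups and break-even standby intervals spread over two processors --- against the optimum's cost, which is essentially one wake-up per round because it runs the round's unit jobs back-to-back and powers down at once, and laying the rounds far enough apart that the optimum, too, must wake once per round but no farther, drives $\MC{E}(\MC{L})/\MC{E}(\MC{O})$ to $6$ as $E_w\to\infty$. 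I expect the main obstacle to be exactly this last bookkeeping: choosing the inter-round spacing and the internal gadget so that the per-round ratio lands on the claimed constant $6$ while Condition~(\ref{eq-edf}) is respected across round boundaries; the unbounded statement, by contrast, is conceptually straightforward once the pile-up construction is in place.
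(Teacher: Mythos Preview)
Your plan follows the same strategy as the paper: adversarial instances that exploit procrastination, with Part~1 forcing many simultaneous starts and Part~2 a repeating round pattern. Two remarks are worth making.

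For Part~1, the paper's construction is simpler than yours. It uses a single long job $j_1=(0,4k,3k)$, delayed by $\MC{L}$ to its latest start $k$, together with $k-1$ unit jobs released at time $k$ with deadline $4k-1$. Because $\MC{L}$ also delays these unit jobs to their latest start $4k-2$, while $j_1$ occupies one processor throughout $[k,4k]$, all $k-1$ unit jobs must be run in the single unit $[4k-2,4k-1]$, forcing $k$ processors in total. Your ``many slack jobs with coinciding latest starts'' achieves the same pile-up and is correct, but there is no need for the staggering you worry about.

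For Part~2, your round description is on the right track but the timing is the whole game, and your phrasing ``exactly as it was about to power down'' is dangerous. If the deferred job's latest start falls \emph{before} the first processor powers down, $\MC{L}$ stays on and you get only one wake-up on that processor; a careful count of your sketch then gives roughly $2$ wake-ups $+$ $3$ standby-$\MC{B}$ intervals $=5E_w$ per round, i.e.\ a ratio of $5$, not $6$. The paper places the second job's latest start just \emph{after} the first power-down: jobs $j_i=((i-1)\MC{B}+2i,\,i\MC{B}+2i+1)$ make $\MC{L}$ wake for $j_{2k-1}$, idle $\MC{B}$, power down, then wake again for $j_{2k}$; at that same instant an urgent unit $j'_k=(2k\MC{B}+4k,\,2k\MC{B}+4k+1)$ forces the second processor on. This yields $3$ wake-ups $+$ $3$ standby-$\MC{B}$ $=6E_w$ per round. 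The optimum does not process a round's three jobs together (they span $\approx 2\MC{B}$); instead it groups the tail of one round with the head of the next into one short awake window of cost $\approx E_w$. Your ``gadget'' can be just this single urgent unit job --- no extra companion job is needed.
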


%

%
The above lemma shows that, when the jobs can have arbitrary execution times, packing of the jobs should be done more carefully.
Furthermore, even when we have $c_j = 1$ for all $j \in \MC{J}$,
there is still room for improvement on the competitive factor.


\paragraph{Our Main Idea.}

The examples provided in Lemma~\ref{lemma_ltr_lower_bound}
give a rough idea on the drawbacks of $\MC{L}$, which are twofold: (1) Scheduling jobs on different processors using a global priority queue can easily result in deadline misses. (2) Blindly delaying the jobs can lead to a less energy-efficient schedule.

%

\smallskip

The former one is more straightforward to deal with. By suitably partitioning the job set, the feasibility can be assured by our assumption on Condition~(\ref{eq-edf}). 
For the latter problem, we introduce the concept of \emph{energy-efficient anchors} for the jobs in order to determine the appropriate timing to begin their execution.

\subsection{Our Algorithm}

We define some notations to help present our online algorithm $\MC{S}$ and the analysis that follows.
%
%
Let $\MC{J}$ be the input set of jobs, and recall that $\MC{J}(t)$ 
is the subset 
of jobs whose arrival times are smaller than or equal to $t$.
%

\smallskip

For any $t, t^\dagger$ with $0\le t\le t^\dagger$, let $\MB{Q}(t)$ be the subset of $\MC{J}(t)$ that contains the jobs 
that have not yet finished their execution, 
%
and let $\MB{Q}(t,t^\dagger)$ 
be the subset of $\MB{Q}(t)$ containing those jobs whose deadlines are smaller than or equal to $t^\dagger$.
Note that, by definition, we have $\MB{Q}(t,t^\dagger) \subseteq \MB{Q}(t) \subseteq \MC{J}(t) \subseteq \MC{J}$.
For notational brevity, let $c'_j(t)$ denote the remaining execution time of job $j$ at time $t$, and let $W(t) = \sum_{j \in \MB{Q}(t)}c'_j(t)$ and $W(t, t^\dagger)
= \sum_{j \in \MB{Q}(t,t^\dagger)}c'_j(t)$ denote the total remaining execution time of the jobs in $\MB{Q}(t)$ and $\MB{Q}(t,t^\dagger)$, respectively. 
Furthermore, we divide $\MB{Q}(t)$ into two subsets according to the arrival times of the jobs. For any $t, t^*$ with $0\le t^* \le t$,
let $\MB{Q}^{t^*}_{proc}(t)$ be the subset of $\MB{Q}(t)$ containing the jobs whose arrival times are less than $t^*$, and 
let $\MB{Q}^{t^*}_{forth}(t) = \MB{Q}(t) \backslash \MB{Q}^{t^*}_{proc}(t)$.

\smallskip

Let $\lambda$, $0 \le \lambda \le 1$, be a constant to be determined 
later.
For each job $j \in \MC{J}$, we define a parameter $h_j$ to be $\max\BIGBP{a_j, d_j - \lambda\MC{B}}$.
The value $h_j$ is referred to as the \emph{energy-efficient anchor} for job $j$.

\smallskip

Let $M_1$ and $M_2$ denote the two processors which our algorithm $\MC{S}$ will manage.
We say that the system is \emph{running}, if at least one processor is executing a job.
The system is said to be \emph{off} if all processors are turned off. 
Otherwise, the system is said to be in \emph{standby}.
%
%
%
%
%
During the process of job scheduling, our algorithm $\MC{S}$ maintains an \emph{urgency flag}, which is initialized to be \emph{false}.
%
At any time $t$, $\MC{S}$ proceeds as follows.
\begin{enumerate}[(A)]
\item {
Conditions for switching on the processors:}

	\begin{enumerate}
	\item {\bf If} the system is \emph{off} and there exists some $j \in \MB{Q}(t)$ such that $h_j \ge t$, {\bf then} turn on processor $M_1$.
	
	\item {\bf If} the \emph{urgency flag} is \emph{false} and there exists some $t^\dagger$ with $t^\dagger > t$ such that $W(t, t^\dagger) \geq t^\dagger - t$, {\bf then} 
	
		\begin{compactitem}
			\item
				turn on $M_1$ if it is \emph{off}, 
				
			\item
				turn on $M_2$, set $t^*$ to be $t$, and set the \emph{urgency flag} to be \emph{true}.
		\end{compactitem}
		
	
	\end{enumerate}

\item {
To handle the job scheduling:}

	\begin{enumerate}
	\item {\bf If} the \emph{urgency flag} is \emph{true}, {\bf then} use the earliest-deadline-first principle to schedule jobs from $\MB{Q}^{t^*}_{proc}(t)$ on $M_1$ and jobs from $\MB{Q}^{t^*}_{forth}(t)$ on $M_2$.
    
	\item {\bf If} the \emph{urgency flag} is \emph{false} and the system is not \emph{off}, {\bf then} use the EDF principle to schedule jobs from $\MB{Q}(t)$ on the processor that is \emph{on}.
	\end{enumerate}

\item {
Conditions for turning off the processors:}

	\begin{enumerate}
	\item {\bf If} the \emph{urgency flag} is \emph{true} and $\MB{Q}^{t^*}_{proc}(t)$ becomes empty,
	{\bf then} turn off $M_1$ and set the \emph{urgency flag} to be \emph{false}.
	
	\item {\bf If} the \emph{urgency flag} is \emph{false}, the system is \emph{standby}, and $t-t_1 \geq \MC{B}$, where $t_1$ is the time processor $M_1$ was turned on, {\bf then} turn off all processors.
	\end{enumerate}
	
\end{enumerate}

Note that, $M_1$ and $M_2$ can both be on only when the \emph{urgency flag} is 
\emph{true}.

\subsection{The Analysis} \label{sec-sparse-analysis}

Let $I_1=(\ell_1, r_1), I_2=(\ell_2,r_2), \ldots, I_\kappa=(\ell_\kappa,r_\kappa)$, where $r_i<\ell_j$ for $1\le i<j\le \kappa$, be the set of time intervals during which the system is either \emph{running} or \emph{in standby}. 
We also refer these intervals to as the awaken-intervals.
For 
ease of presentation, let $I_0 = (0,0)$ be a dummy awaken-interval.

\smallskip

\begin{lemma} \label{lemma-feasibility}
Provided that Condition~(\ref{eq-edf}) from Lemma~\ref{lemma_offline_condition} holds for the input jobs, algorithm $\MC{S}$ always produces a feasible schedule.
\end{lemma}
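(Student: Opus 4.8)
The plan is to show that no job misses its deadline by analyzing the two modes of operation of $\MC{S}$ separately and arguing that each awaken-interval during which the urgency flag is \emph{true} corresponds to a time window in which EDF on two processors suffices, while outside such windows the single-processor workload is always schedulable by Condition~(\ref{eq-edf}). First I would fix an arbitrary job $j$ and track when it can be executed. By the switching-on rules (A1) and (A2), the system is guaranteed to be running whenever there is a job $j'$ with $h_{j'} \ge t$ still unfinished, or whenever some deadline-feasibility threshold $W(t,t^\dagger) \ge t^\dagger - t$ is reached; the key observation is that $h_j = \max\{a_j, d_j - \lambda\MC{B}\} \le d_j$, so rule (A1) forces a processor on no later than $h_j$, i.e.\ at least $\lambda\MC{B}$ before $d_j$ (or at $a_j$), and the job is never starved for lack of a running processor.

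The heart of the argument is the two-processor split in rule (B1). When the urgency flag is set \emph{true} at some time $t^*$, the jobs in $\MB{Q}(t^*)$ are partitioned permanently by arrival time into $\MB{Q}^{t^*}_{proc}$ (those present at $t^*$) scheduled on $M_1$, and $\MB{Q}^{t^*}_{forth}$ (those arriving after $t^*$) scheduled on $M_2$. I would argue feasibility of each part via Lemma~\ref{lemma_offline_condition}: since $\MB{Q}^{t^*}_{proc}(t^*) \subseteq \MC{J}$ and $\MC{J}$ satisfies Condition~(\ref{eq-edf}), and all these jobs have arrival time $\le t^*$, the residual instance on $M_1$ starting from $t^*$ is an EDF-schedulable instance provided the window $[t^*, d_{\max}]$ has enough length --- this is exactly where the trigger condition $W(t^*, t^\dagger) \ge t^\dagger - t^*$ (which said $M_2$ was woken because the single processor could no longer keep up) must be combined with Condition~(\ref{eq-edf}) to show that after offloading the late-arriving jobs to $M_2$, what remains on $M_1$ fits. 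Symmetrically, the jobs routed to $M_2$ all arrive in $[t^*, \cdot]$ and form a sub-instance of $\MC{J}$, hence satisfy Condition~(\ref{eq-edf}) restricted to intervals $(\ell,r)$ with $\ell \ge t^*$, so EDF on $M_2$ alone meets their deadlines.

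I would then check the seams: when rule (C1) fires ($\MB{Q}^{t^*}_{proc}(t)$ empty, $M_1$ off, flag reset to \emph{false}), the only remaining unfinished jobs arrived after $t^*$ and were being handled on $M_2$; control passes back to the single-processor EDF of rule (B2) on the processor that stays on, and these jobs again form a Condition~(\ref{eq-edf})-feasible sub-instance, so nothing is dropped across the transition. Finally, rule (C2) only turns everything off when the system is in \emph{standby} (no job running) and has idled for $\MC{B}$; combined with the (A1) guarantee that a processor is switched back on by time $h_j$ for any pending $j$, and $h_j \le d_j$, no deadline is violated by a shutdown. The main obstacle I anticipate is the bookkeeping at the boundary where $M_2$ is activated: one must verify that the permanent proc/forth partition made at the single instant $t^*$ is consistent with EDF never needing to move a job between processors later, and that the residual workload on $M_1$ genuinely satisfies the interval condition after the split --- this requires carefully relating the wake-up trigger $W(t,t^\dagger)\ge t^\dagger-t$ to the global feasibility Condition~(\ref{eq-edf}) and using that EDF is optimal for single-processor feasibility (as noted after Lemma~\ref{lemma_offline_condition}).
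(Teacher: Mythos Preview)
Your overall approach mirrors the paper's: case-split on whether the urgency flag is set in an awaken-interval, and in the urgency case argue feasibility separately for $\MB{Q}^{t^*}_{proc}$ on $M_1$ and $\MB{Q}^{t^*}_{forth}$ on $M_2$ via Lemma~\ref{lemma_offline_condition}. The $M_2$ side is exactly right: those jobs all have $a_j \ge t^*$, form a sub-instance of $\MC{J}$, and $M_2$ runs EDF on them from $t^*$, so Condition~(\ref{eq-edf}) applies directly.

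The gap is in your argument for $M_1$. You say the trigger condition $W(t^*,t^\dagger)\ge t^\dagger-t^*$ ``must be combined with Condition~(\ref{eq-edf}) to show that after offloading the late-arriving jobs to $M_2$, what remains on $M_1$ fits.'' But the trigger condition is a \emph{lower} bound on total remaining workload; it says nothing about the proc-share and cannot be used to upper-bound $M_1$'s load. What actually does the work is the \emph{absence} of the trigger at every $t<t^*$: rule~(A2) is checked continuously, so for all $t\in[\ell_i,t^*)$ and all $d$ we have $W(t,d)<d-t$. Since for $t<t^*$ every job in $\MB{Q}(t)$ has arrival $<t^*$ and hence belongs to the proc set, this inequality bounds precisely $M_1$'s residual workload throughout $[\ell_i,t^*)$; and after $t^*$ no new proc jobs arrive, so the bound persists. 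This is what the paper means by ``$W(\ell_i,t)<t-\ell_i$ for all $t>\ell_i$'' and then ``by a similar argument'' in the urgency case. Relatedly, $M_1$ has been running EDF on exactly the proc jobs since $\ell_i$, not just since $t^*$, so framing it as ``the residual instance on $M_1$ starting from $t^*$'' obscures why the backlog at $\ell_i$ (jobs that arrived while the system was off) is absorbable---Condition~(\ref{eq-edf}) alone, applied to $(r_{i-1},d)$, gives only $d-r_{i-1}$, which is too weak; you need the non-trigger fact to get $d-\ell_i$.
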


Below, we bound the competitive factor of $\MC{S}$.
Let $\MC{P}_i$ be the number of times $\MC{S}$ turns on a processor
in $I_i$. We have the following lemma regarding $\MC{P}_i$.

\begin{lemma}
\label{lemma-awaken-interval-property}
For each $i$, $1\le i\le \kappa$, we have
\begin{compactitem}
	\item
		$\MC{P}_i \le 2$.
		
		\smallskip
		
	\item
		If $\MC{P}_i = 2$, then the amount of workload that arrives after $r_{i-1}$ and has to be done before $r_i$ is at least $\lambda\MC{B}$.
\end{compactitem}
\end{lemma}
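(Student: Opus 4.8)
The plan is to trace through the logic of algorithm $\MC{S}$ within a single awaken-interval $I_i = (\ell_i, r_i)$ and count turn-on events. A turn-on event happens either via rule~(A1) (the system was \emph{off}, and some job's anchor has come due, so $M_1$ is switched on), or via rule~(A2) (the \emph{urgency flag} flips from false to true, at which point $M_2$ is switched on, and $M_1$ is switched on too if it happened to be off at that instant). I would first argue that within $I_i$ rule~(A1) fires at most once: by definition of an awaken-interval, the system is continuously \emph{running} or \emph{standby} on $(\ell_i, r_i)$, so it is never \emph{off} strictly inside $I_i$; the only moment rule~(A1) can trigger is at $\ell_i$ itself, the left endpoint where $I_i$ begins. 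Likewise the \emph{urgency flag} can flip true-to-false only via rule~(C1), which simultaneously turns $M_1$ off; after that the system is in \emph{standby} on one processor (if any work remains) until either a new urgency event occurs or rule~(C2) shuts everything down, ending $I_i$. So I would show the flag flips from false to true at most once per awaken-interval as well — a second urgency event would have to be preceded by an intervening period with the flag false; during such a period, if the flag goes true again, that is fine, but I need to check the interval doesn't split. The cleanest way: show that whenever rule~(C1) turns $M_1$ off inside $I_i$, the remaining queue $\MB{Q}(t)$ is nonempty with an anchor already passed, or else rule~(C2) eligibility is not yet met, so the system stays awake; and then argue a fresh urgency event inside the same $I_i$ cannot occur because the workload that would cause it was already being handled. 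This is the step I expect to be the main obstacle, since it requires a careful invariant on what $\MB{Q}^{t^*}_{proc}$, $\MB{Q}^{t^*}_{forth}$, and the flag look like across the lifetime of $I_i$; the feasibility argument of Lemma~\ref{lemma-feasibility} should supply most of the needed structural facts.

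Granting that, $\MC{P}_i \le 2$ follows: at most one firing of (A1) turning on $M_1$, and at most one firing of (A2) which can turn on at most $M_1$ (if still off) and $M_2$ — and if (A1) already fired then (A2) only adds $M_2$, while if (A1) never fired then (A2) turns on both at once, still a total of two. Either way $\MC{P}_i \le 2$.

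For the second bullet, suppose $\MC{P}_i = 2$. Then rule~(A2) must have fired at some time $t$ in $I_i$ — this is the only way to reach two turn-ons, since rule~(A1) alone contributes only one. At that moment the firing condition of (A2) held: there is some $t^\dagger > t$ with $W(t, t^\dagger) \ge t^\dagger - t$, i.e., the jobs currently unfinished with deadline at most $t^\dagger$ carry total remaining work at least $t^\dagger - t$. I would now argue this remaining work must be ``new'' relative to $r_{i-1}$, meaning it arrived after $r_{i-1}$: before $t$ the flag was false, so on $(\ell_i, t)$ at most one processor (namely $M_1$) was ever busy, running EDF on $\MB{Q}(t)$; had this mass of work with a near deadline been present and available earlier without triggering (A2), one single processor running flat-out since $\ell_i \ge r_{i-1}$ would already have cleared enough of it, contradicting $W(t,t^\dagger) \ge t^\dagger - t$ at the later time $t$ — unless the bulk of it arrived after $r_{i-1}$. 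Pushing this through, the amount of workload arriving after $r_{i-1}$ that must complete by $t^\dagger \le r_i$ is at least $W(t,t^\dagger)$ minus whatever old work one processor could have retired on $(r_{i-1}, t)$, and combining with $W(t,t^\dagger) \ge t^\dagger - t$ together with the anchor mechanism (jobs are not forced onto the processor before their anchor $h_j = \max\{a_j, d_j - \lambda\MC{B}\}$, so any job contributing here has $d_j - a_j$ small or $t^\dagger - t$ bounded in terms of $\lambda\MC{B}$) yields a lower bound of $\lambda\MC{B}$ on that new workload. The delicate part is the bookkeeping of ``how much old work could $M_1$ have cleared before $t$,'' which again leans on the invariant from the first half of the proof; once that is in hand the $\lambda\MC{B}$ bound drops out by an interval-length estimate tied to the definition of $h_j$.
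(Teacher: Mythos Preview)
Your proposal has genuine gaps in both parts.

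\smallskip

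\textbf{First bullet.} You correctly note that rule~(A1) can fire only at $\ell_i$, but you concede that showing rule~(A2) fires at most once is ``the main obstacle'' and defer it to unspecified invariants. The missing argument is short and direct: when (C1) resets the flag at some time~$\tau$, the set $\MB{Q}^{t^*}_{proc}(\tau)$ is empty, so $\MB{Q}(\tau)=\MB{Q}^{t^*}_{forth}(\tau)$ consists only of jobs with arrival time $\ge t^*$. Processor $M_2$ has been running EDF on precisely that set since time $t^*$, so by Condition~(\ref{eq-edf}) we have $W(t',t'')\le t''-t'$ for all $t'\ge t^*$; hence (A2) cannot fire again inside $I_i$. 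This is the whole content of the step you flagged as the obstacle.

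\smallskip

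\textbf{Second bullet.} Your bookkeeping route does not lead to the bound. The remark ``unless the bulk of it arrived after $r_{i-1}$'' is vacuous: every job in $\MB{J}_i$ arrives after $r_{i-1}$, since the queue is empty at $r_{i-1}$. More importantly, your suggestion that the anchor mechanism bounds $t^\dagger-t$ in terms of $\lambda\MC{B}$ is false; $t^\dagger-t^*$ can be arbitrarily small (a single short urgent job can trigger (A2)). The quantity that is actually bounded below by $\lambda\MC{B}$ is $t^\dagger-\ell_i$, and getting there requires a step you do not have. The paper's argument runs as follows: first show the ready queue is nonempty throughout $[\ell_i,t^*]$ (otherwise, from the last empty moment onward, Condition~(\ref{eq-edf}) plus EDF would force $W(t^*,t^\dagger)\le t^\dagger-t^*$, contradicting urgency). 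Consequently EDF spends all of $[\ell_i,t^*]$ on jobs with deadline $\le t^\dagger$, so the total execution time of such jobs exceeds $(t^*-\ell_i)+W(t^*,t^\dagger)>t^\dagger-\ell_i$. By Condition~(\ref{eq-edf}) applied to $(\ell_i,t^\dagger)$, not all of these jobs can have arrived at or after $\ell_i$; hence some job $j$ with $d_j\le t^\dagger$ has $a_j\in(r_{i-1},\ell_i)$. Since $j$ was in the queue while the system was still off but did \emph{not} trigger (A1) before $\ell_i$, its anchor satisfies $h_j\ge\ell_i$, which forces $d_j\ge\ell_i+\lambda\MC{B}$. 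Therefore $t^\dagger\ge d_j\ge\ell_i+\lambda\MC{B}$, and the workload in question is at least $t^\dagger-\ell_i\ge\lambda\MC{B}$. The key idea you are missing is to isolate a single job that arrived during the off interval $(r_{i-1},\ell_i)$ and use its anchor to lower-bound $t^\dagger-\ell_i$.
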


Let $\MB{J}_i$ denote the set of the jobs executed in awaken-interval
$I_i$ in $\MC{S}$.
As a consequence to Lemma~\ref{lemma-awaken-interval-property}, we have $\sum_{j\in \MB{J}_i} c_j\ge \lambda\MC{B}$ for each $I_i$ with $\MC{P}_i=2$, $1\le i\le \kappa$.
Let $\MC{O}$ be an optimal offline uni-processor schedule.
Below, we 
relate our online schedule to the optimal offline schedule $\MC{O}$. 
Suppose that 
$U_i = (\ell_{i,opt}, r_{i,opt})$, $1\le i\le m$, is the set of intervals for which $\MC{O}$ keeps the system off. 
We also refer these intervals to as the \emph{sleep intervals} of $\MC{O}$.

\begin{lemma} \label{lemma_sparse_sleep_vs_working} 
For any $i$, $1\le i\le \kappa$, if 
both $r_{i-1}$ and $\ell_i$ are contained in some sleep interval $U_u = (\ell_{u,opt}, r_{u,opt})$ of $\MC{O}$,
then $\MC{P}_i = 1$ and $r_i - r_{u,opt} \ge \BIGP{1-\lambda}\MC{B}$.
\end{lemma}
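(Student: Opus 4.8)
The plan is to exploit the fact that if $r_{i-1}$ and $\ell_i$ both lie inside a single sleep interval $U_u$ of $\MC{O}$, then between the end of awaken-interval $I_{i-1}$ and the start of $I_i$ the offline optimum does no work, hence no job with deadline in that range can have arrived before $r_{i-1}$ — otherwise $\MC{O}$ would be forced to execute it and could not keep the system off throughout $U_u$. I would first make this precise: every job $j$ executed by $\MC{S}$ in $I_i$ must have $d_j > r_{i,opt} \ge \ell_i$, and in fact (using Condition~(\ref{eq-edf}) applied to $\MC{O}$ on the interval from $r_{u,opt}$ onward) any job contributing to the workload that causes $I_i$ to begin must have arrival time $\ge \ell_{u,opt}$, and more usefully arrival time $> r_{i-1}$, since $\MC{O}$ is idle on $(r_{i-1},\ell_i) \subseteq U_u$.

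Next I would show $\MC{P}_i = 1$. By Lemma~\ref{lemma-awaken-interval-property}, if $\MC{P}_i = 2$ then at least $\lambda\MC{B}$ units of workload arrive after $r_{i-1}$ and must be completed before $r_i$. But all of this workload has deadline before $r_i$ and, by the previous paragraph, arrival time after $r_{i-1}$; so it is workload that $\MC{O}$ must also schedule in $(r_{i-1}, r_i)$. Since $r_{i-1}, \ell_i \in U_u$, the interval $(r_{i-1}, \ell_i)$ is a sub-interval of $U_u$ during which $\MC{O}$ is off, so $\MC{O}$ must fit all $\lambda\MC{B}$ units into $[\ell_i, r_i]$ — I would argue this forces $\MC{O}$ to have turned on before or at $\ell_i$, contradicting $\ell_i \in U_u$, OR directly derive a contradiction with the second bullet of Lemma~\ref{lemma-awaken-interval-property} combined with the structure of how $\MC{S}$ wakes the second processor (rule~(A)(ii) fires only when $W(t,t^\dagger) \ge t^\dagger - t$, i.e., genuine over-demand relative to a single processor). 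The cleanest route is: two-processor wake-up requires demand exceeding single-processor capacity over some window ending by $r_i$, all arriving after $r_{i-1}$; that same demand would violate feasibility of $\MC{O}$'s choice to sleep through $(r_{i-1},\ell_i)$.

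Finally, for the bound $r_i - r_{u,opt} \ge (1-\lambda)\MC{B}$: here I would use the energy-efficient anchor mechanism. Processor $M_1$ is turned on in $I_i$ either by rule~(A)(i), because some job $j \in \MB{Q}(\ell_i)$ has $h_j \ge \ell_i$, i.e., $\ell_i \le \max\{a_j, d_j - \lambda\MC{B}\}$; since $a_j \le \ell_i$ (the job has arrived) this gives $\ell_i \le d_j - \lambda\MC{B}$, so $d_j \ge \ell_i + \lambda\MC{B}$. As $\MC{P}_i = 1$, $I_i$ ends only when the single processor has cleared its queue, so $r_i \ge d_j - \text{(something)}$; more carefully, $r_i$ is at least the completion time, and EDF on one processor will keep running until after the anchor-triggering job is done — I would show $r_i \ge \ell_i + \lambda\MC{B} \ge$ (deadline-driven lower bound). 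Combining with $r_{u,opt} \le \ell_i$ (since $\ell_i \in U_u$ means $\ell_i \le r_{u,opt}$ — wait, $\ell_i$ inside $U_u$ gives $\ell_{u,opt} \le \ell_i \le r_{u,opt}$, hence $r_{u,opt} \ge \ell_i$), I'd instead bound $r_i - r_{u,opt}$ by noting $\MC{O}$ wakes at $r_{u,opt}$ to handle exactly the jobs $\MC{S}$ handles in $I_i$, and the anchor guarantees these jobs' deadlines extend at least $\lambda\MC{B}$ beyond their arrival; a short accounting of when $\MC{S}$ can legitimately turn off (rule~(C)(ii), after $\MC{B}$ in standby, but here workload keeps it running) yields $r_i \ge r_{u,opt} + (1-\lambda)\MC{B}$. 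The main obstacle I anticipate is the second half: carefully relating $r_i$ (an online turn-off time governed by the anchor rule and the $\MC{B}$-standby rule) to $r_{u,opt}$ (an offline wake-up time), since one must rule out $\MC{S}$ turning off "too early" and must correctly track which jobs force the awaken-interval to persist; the anchor definition $h_j = \max\{a_j, d_j - \lambda\MC{B}\}$ is precisely engineered to make this bound come out to $(1-\lambda)\MC{B}$, so the argument should pivot on unpacking that definition at the job that triggers rule~(A)(i) for $I_i$.
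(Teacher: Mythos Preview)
Your plan for $\MC{P}_i = 1$ is workable in its second form: urgency at some $t^* \ge \ell_i$ would leave demand $\ge t^\dagger - t^*$ unfinished, while $\MC{O}$, which starts only at $r_{u,opt} \ge \ell_i$ and is therefore never ahead of $\MC{S}$, would face at least the same backlog on one processor---contradiction. Your first attempt, via the $\lambda\MC{B}$ workload from Lemma~\ref{lemma-awaken-interval-property}, does not work: fitting $\lambda\MC{B}$ units into $[\ell_i, r_i]$ places no constraint on when $\MC{O}$ wakes, since $r_i - \ell_i$ may be much larger than $\lambda\MC{B}$.

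The real gap is the bound $r_i - r_{u,opt} \ge (1-\lambda)\MC{B}$. You read the anchor condition as yielding $d_j \ge \ell_i + \lambda\MC{B}$, but this is the wrong direction and, combined with $d_j \ge r_{u,opt}$, gives no upper bound on $r_{u,opt}$ at all; your final paragraph indeed never closes. The paper's proof runs the other way. Since $\ell_i \in U_u$, the wake-up at $\ell_i$ cannot be by urgency (that would already contradict $\MC{O}$ sleeping past $\ell_i$), so it is because the anchor of the earliest-deadline job $j \in \MB{Q}(\ell_i)$ has been \emph{reached}, giving $d_j \le \ell_i + \lambda\MC{B}$. Together with $d_j \ge r_{u,opt}$ (which you did observe) and the design guarantee $r_i - \ell_i \ge \MC{B}$ from rule~(C)(ii) (which you mention but never deploy), one obtains directly
\[
r_i - r_{u,opt} \;\ge\; r_i - d_j \;\ge\; (r_i - \ell_i) - \lambda\MC{B} \;\ge\; (1-\lambda)\MC{B}.
\]
The inequality $d_j \le \ell_i + \lambda\MC{B}$ is the pivot you are missing. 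Note that rule~(A)(i) as literally printed (``$h_j \ge t$'') appears to be a typo for ``$h_j \le t$''---the intended semantics, consistent with the paper's own proof and with the whole idea of procrastination, is that the system wakes when the current time \emph{reaches} the anchor; following the printed inequality literally sent you in an unrecoverable direction.
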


\begin{lemma} \label{lemma_sparse_sleep_n_awaken}
Each sleep interval of $\MC{O}$ intersects with at most two awaken-intervals of $\MC{S}$. In particular, each of $U_1$ and $U_m$ intersects with at most one.
\end{lemma}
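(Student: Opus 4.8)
The plan is to argue by contradiction on the first claim and then to observe that the structure of $\MC{S}$'s awaken-intervals is forced at the boundary of any sleep interval. Suppose some sleep interval $U_u = (\ell_{u,opt}, r_{u,opt})$ of $\MC{O}$ intersects three (or more) consecutive awaken-intervals $I_i, I_{i+1}, I_{i+2}$ of $\MC{S}$. Since the awaken-intervals are ordered and pairwise disjoint, both endpoints $r_i$ and $\ell_{i+2}$ — and hence also $r_{i+1}$ and $\ell_{i+1}$ — lie strictly inside $U_u$; in particular $r_i$ and $\ell_{i+1}$, as well as $r_{i+1}$ and $\ell_{i+2}$, are contained in the common sleep interval $U_u$. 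I would then apply Lemma~\ref{lemma_sparse_sleep_vs_working} to the pair $(r_i, \ell_{i+1})$: it gives $\MC{P}_{i+1} = 1$ and, more importantly, $r_{i+1} - r_{u,opt} \ge (1-\lambda)\MC{B}$. But the same lemma applied to the pair $(r_{i+1}, \ell_{i+2})$ also yields $r_{i+2} - r_{u,opt} \ge (1-\lambda)\MC{B}$, and — this is the real point — the hypothesis of that application forces $r_{i+1}$ to lie in $U_u$, i.e.\ $r_{i+1} < r_{u,opt}$, which already contradicts $r_{i+1} \ge r_{u,opt} + (1-\lambda)\MC{B} > r_{u,opt}$. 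So $U_u$ can meet at most two awaken-intervals.

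For the sharper statement about $U_1$ and $U_m$, I would use that $U_1 = (\ell_{1,opt}, r_{1,opt})$ is the \emph{first} sleep interval of $\MC{O}$, so $\MC{O}$ is running (or in standby) on $[0, \ell_{1,opt}]$ and never turned off before $\ell_{1,opt}$; consequently any awaken-interval $I_i$ of $\MC{S}$ whose right endpoint precedes $\ell_{1,opt}$ cannot have its left endpoint inside $U_1$ as well — because the only way $I_i$ would start before $U_1$ and end inside it is excluded once we note $\ell_{1,opt}$ is the earliest moment $\MC{O}$ sleeps. More precisely: if $U_1$ met two awaken-intervals $I_i, I_{i+1}$, then $r_i$ lies in $U_1$, so $r_i > \ell_{1,opt} \ge 0$; but by the definition of $I_0 = (0,0)$ and the awaken-interval ordering, $r_{i-1}$ need not be inside $U_1$ — here I would instead observe directly that $\MC{S}$'s first awaken-interval $I_1$ has $\ell_1 = 0$ (the system starts on at the first job), hence $\ell_1 \notin U_1$ since $\ell_{1,opt} \ge 0$ and $U_1$ is open; and if $i \ge 2$ with $r_{i-1}, \ell_i$ both in $U_1$, Lemma~\ref{lemma_sparse_sleep_vs_working} gives $r_i - r_{1,opt} \ge (1-\lambda)\MC{B} > 0$, contradicting $r_i \in U_1$. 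The argument for $U_m$ is symmetric: $\MC{O}$ is off from $\ell_{m,opt}$ to the end (or $U_m$ is $\MC{O}$'s last sleep interval before the last awaken stretch), so $r_{m,opt}$ is essentially the end of the schedule and no awaken-interval of $\MC{S}$ can start after it while another ends before it.

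I expect the main obstacle to be the edge cases at the two ends of the timeline rather than the bulk argument: one must be careful about whether the dummy interval $I_0 = (0,0)$ can play the role of ``$I_{i-1}$'' in an application of Lemma~\ref{lemma_sparse_sleep_vs_working}, and about whether ``$U_m$ intersects an awaken-interval'' is even well-defined when $\MC{O}$ stays off until the very end (in which case $r_{m,opt}$ coincides with the makespan and the claim is vacuous on the right). Apart from these boundary bookkeeping issues, the interior case is a clean two-line deduction from Lemma~\ref{lemma_sparse_sleep_vs_working}, exploiting that its conclusion $r_i - r_{u,opt} \ge (1-\lambda)\MC{B}$ places $r_i$ strictly to the right of $r_{u,opt}$ and therefore outside $U_u$, which is exactly what rules out a third overlapping awaken-interval.
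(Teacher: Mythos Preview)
Your core argument for the ``at most two'' claim is correct and is exactly the paper's: if a sleep interval $U_u$ contains both $r_i$ and $\ell_{i+1}$, Lemma~\ref{lemma_sparse_sleep_vs_working} forces $r_{i+1}>r_{u,opt}$, so $\ell_{i+2}\notin U_u$. The paper states this in one line and stops.

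Where you go beyond the paper is in trying to justify the sharper claim for $U_1$ and $U_m$, and here there is a slip. Your assertion that ``$\MC{S}$'s first awaken-interval $I_1$ has $\ell_1=0$ (the system starts on at the first job)'' is not correct: by the description of $\MC{S}$, the system is initially \emph{off} and only turns on when some job's energy-efficient anchor $h_j$ is reached or the urgency condition triggers, so in general $\ell_1>0$. Consequently your case split ``$i=1$ versus $i\ge 2$'' does not cover what you need; even for $i\ge 2$ you implicitly assume $r_{i-1}\in U_1$, which need not hold if $r_{i-1}<\ell_{1,opt}$. The paper's own proof is in fact silent on the $U_1,U_m$ refinement; the intended reading (consistent with the later ``concatenate $U_1$ and $U_m$'' remark) is that $U_1$ begins at or before time $0$ and $U_m$ extends past the last deadline, so that the dummy $r_0=0$ effectively lies in $U_1$ and Lemma~\ref{lemma_sparse_sleep_vs_working} applied at $i=1$ already gives $r_1>r_{1,opt}$, leaving room for only one overlap. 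If you want a self-contained argument, that is the direction to take rather than claiming $\ell_1=0$.
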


%

\begin{figure*}[t]
\centering
\includegraphics[scale=0.9]{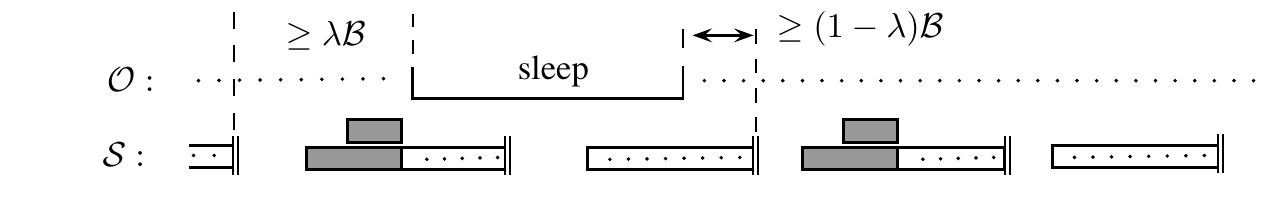}
\vspace{-10pt}
\caption{
The relative positions between sleep intervals of $\MC{O}$ and awaken-intervals of $\MC{S}$.
}
\label{fig_uni_energy_charging}
\end{figure*}

The above two lemmas characterize the relative positions between the sleep intervals of $\MC{O}$ and the awaken intervals they intersect with. See also Fig.~\ref{fig_uni_energy_charging} for an illustration.
For 
simplicity,
we 
implicitly concatenate $U_1$ and $U_m$, i.e., we assume that the left end of $U_1$ is connected to the right end of $U_m$ for the rest of this section.

Below we upper-bound the energy consumption of $\MC{S}$ by the energy consumption of $\MC{O}$. 
In order to help present the analysis,
we define the following notations for 
the energy consumption of the system in $\MC{S}$ and in $\MC{O}$:
(a) $\MC{E}_{w,\sigma}$: energy for waking up and standby, 
(b) $\MC{E}_{w}$: energy for waking up, 
(c) $\MC{E}_{b}$: energy for executing jobs, and
(d) $\MC{E}_{b,\sigma}$: energy for executing jobs and standby. 

\smallskip

We use $\MC{E}_{w,\sigma}(\MC{S}, t_1, t_2)$, and
$\MC{E}_{b}(\MC{S}, t_1, t_2)$ to denote the corresponding
energy $\MC{S}$ consumes between time $t_1$ and $t_2$. Furthermore,
$\MC{E}_{b,\sigma}(\MC{O}, t_1, t_2)$, and
$\MC{E}_{w}(\MC{O}, t_1, t_2)$ are the corresponding
energies $\MC{O}$ consumes between time $t_1$ and $t_2$.
%

\begin{lemma} \label{lemma_sparse_energy_bound}
  For each awaken-interval $I_i$, $1\le i\le \kappa$ and $\lambda\leq 1$, 
  \begin{align}
    \label{eq:energy-pi=2}
    \MC{E}_{w,\sigma}(\MC{S},\ell_i,r_i) \le \psi_{\sigma}\BIGP{\sum_{j\in \MB{J}_i} c_j} + (3-\lambda)E_w. \\
    \label{eq:energy-pi=1}
    \text{Furthermore, if } \MC{P}_i = 1,\text{ then } \MC{E}_{w,\sigma}(\MC{S},\ell_i,r_i) \le 2E_w.
  \end{align}
\end{lemma}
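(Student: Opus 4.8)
The plan is to bound the wake-up-and-standby energy $\MC{E}_{w,\sigma}(\MC{S},\ell_i,r_i)$ that $\MC{S}$ spends inside a single awaken-interval $I_i=(\ell_i,r_i)$ by accounting separately for (i) turn-on operations, (ii) standby while at most one processor is on, and (iii) standby while the \emph{urgency flag} is true and $M_2$ is also on. By Lemma~\ref{lemma-awaken-interval-property} we have $\MC{P}_i\le 2$, so the turn-on cost is at most $2E_w$, and when $\MC{P}_i=1$ it is at most $E_w$. For the standby time, note that $\MC{S}$ only ever switches the whole system off via rule~(C)(b), which requires the system to have been in \emph{standby} for a full $\MC{B}$ units since $M_1$ was turned on; combined with the observation that while the \emph{urgency flag} is true there is no standby on $M_1$ (it is always executing a job from $\MB{Q}^{t^*}_{proc}$, which is non-empty by rule~(C)(a)), the total standby duration in $I_i$ is at most $\MC{B}=E_w/\psi_\sigma$, contributing at most $E_w$. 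That already gives the bound $\MC{E}_{w,\sigma}(\MC{S},\ell_i,r_i)\le 2E_w$ when $\MC{P}_i=1$, which is exactly~(\ref{eq:energy-pi=1}).

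For the general bound~(\ref{eq:energy-pi=2}), when $\MC{P}_i=2$ the extra ingredient is the standby of the \emph{second} processor $M_2$. Here I would use the energy-efficient anchors: $M_2$ is turned on by rule~(A)(b) precisely when $W(t,t^\dagger)\ge t^\dagger-t$ for some $t^\dagger>t$, i.e. when the forthcoming workload is dense enough to fill a processor from now to $t^\dagger$. One should argue that from the moment $M_2$ is switched on until it is switched off (rule~(C)(a), when $\MB{Q}^{t^*}_{proc}(t)$ empties and $M_1$ turns off), the work executed on $M_2$, i.e. the jobs of $\MB{J}_i$ arriving after $t^*$, covers all but at most $\lambda\MC{B}$ of that duration — because the anchor $h_j = \max\{a_j, d_j-\lambda\MC{B}\}$ forces every such job to start within $\lambda\MC{B}$ of its deadline, so at most $\lambda\MC{B}$ of idle/standby time on $M_2$ can accumulate before the deadlines drag the remaining jobs into execution. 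Hence the standby-on-$M_2$ contribution is at most $\psi_\sigma\cdot\lambda\MC{B} = \lambda E_w$. Adding the $2E_w$ for two turn-ons, the $E_w$ for up-to-$\MC{B}$ standby on the single-processor part, subtracting the overlap carefully so the standby is never double counted, and charging the actual busy time of $M_2$ to $\psi_\sigma\sum_{j\in\MB{J}_i}c_j$, one collects $\psi_\sigma(\sum_{j\in\MB{J}_i}c_j) + (2 + 1 - \lambda)E_w = \psi_\sigma(\sum_{j\in\MB{J}_i}c_j)+(3-\lambda)E_w$.

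The steps in order: first fix $I_i$ and decompose $[\ell_i,r_i]$ into the maximal sub-intervals where the \emph{urgency flag} is true versus false, using the algorithm's rules to show there is at most one ``flag-true'' phase per awaken-interval and that the turn-on count $\MC{P}_i$ is spent entirely on $M_1$ at $\ell_i$ and on $M_2$ at the start of that phase; second, bound the standby time in the flag-false portion by $\MC{B}$ via rule~(C)(b); third, bound the standby time on $M_2$ during the flag-true phase by $\lambda\MC{B}$ via the anchor argument, and note $M_1$ has zero standby there; fourth, assemble. The main obstacle I anticipate is the third step: making the anchor argument airtight requires showing that once $M_2$ is on, EDF on $\MB{Q}^{t^*}_{forth}(t)$ cannot leave $M_2$ idle for more than a total of $\lambda\MC{B}$ before it is turned off — one has to rule out scenarios where $M_2$ sits in standby, then a burst of late-anchored jobs arrives, then standby again, accumulating more than $\lambda\MC{B}$; this should follow because any job $j$ on $M_2$ has $d_j - h_j \le \lambda\MC{B}$ and feasibility (Lemma~\ref{lemma-feasibility}) guarantees it is actually run by $d_j$, but the bookkeeping that the \emph{union} of idle stretches stays within $\lambda\MC{B}$ needs care. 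A secondary subtlety is ensuring the $E_w$-worth of standby in the flag-false part and the $\lambda E_w$-worth on $M_2$ genuinely refer to disjoint time/processor resources so the sum is valid rather than an over-count.
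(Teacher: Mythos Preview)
Your treatment of the case $\MC{P}_i=1$ is correct and matches the paper. The $\MC{P}_i=2$ case, however, has a genuine gap in both the mechanism and the arithmetic.

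First, the anchor argument you propose for bounding $M_2$'s standby during the flag-true phase by $\lambda\MC{B}$ does not work. The anchors $h_j$ appear only in rule~(A)(a), which governs turning the \emph{system} on from the \emph{off} state; they play no role once the system is awake, and in particular they do not constrain EDF on $\MB{Q}^{t^*}_{forth}$. Nothing prevents $M_2$ from accumulating far more than $\lambda\MC{B}$ of standby during urgency: jobs in $\MB{Q}^{t^*}_{forth}$ can arrive sparsely with large gaps, and the inequality $d_j-h_j\le\lambda\MC{B}$ for each individual job says nothing about the \emph{total} idle time across many such jobs. Second, even granting your claimed bounds, the pieces add up to $2E_w+E_w+\lambda E_w=(3+\lambda)E_w$, not $(3-\lambda)E_w$; the ``subtracting the overlap'' step is doing work that is never specified and cannot recover a $-2\lambda E_w$ correction. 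Third, you misread the role of the term $\psi_\sigma\sum_{j\in\MB{J}_i}c_j$: it is \emph{not} a charge for busy time (busy energy is $\MC{E}_b$, outside this lemma). It arises because the \emph{standby} time itself is bounded by the workload.

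The paper's argument is this. When $\MC{P}_i=2$, the ready queue is never empty on $[\ell_i,t^*]$ (cf.\ the proof of Lemma~\ref{lemma-awaken-interval-property}), so only $M_2$ can ever be in standby during $I_i$. During urgency $M_1$ runs continuously until $\MB{Q}^{t^*}_{proc}$ empties, so the urgency phase has length equal to $M_1$'s workload, hence at most $\sum_{j\in\MB{J}_i}c_j$; this already bounds $M_2$'s standby during urgency by $\sum_{j\in\MB{J}_i}c_j$. If urgency ends after $\ell_i+\MC{B}$, rule~(C)(b) shuts the system the moment $M_2$ next idles, so total standby $\le\sum c_j$. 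If urgency ends before $\ell_i+\MC{B}$, the total standby over all of $I_i$ is at most $\MC{B}$, and now Lemma~\ref{lemma-awaken-interval-property} gives $\sum c_j\ge\lambda\MC{B}$, so $\MC{B}\le\sum c_j+(1-\lambda)\MC{B}$. Either way the standby time is at most $\sum c_j+(1-\lambda)\MC{B}$, and adding $2E_w$ for the two turn-ons yields $\psi_\sigma\sum c_j+(3-\lambda)E_w$. The missing idea in your proposal is precisely this: bound $M_2$'s urgency standby by $M_1$'s workload (not by an anchor argument), and then invoke $\sum c_j\ge\lambda\MC{B}$ from Lemma~\ref{lemma-awaken-interval-property} to produce the $-\lambda$.
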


%
%
%

\noindent
By combining all the above analysis in Lemmas
\ref{lemma-awaken-interval-property}, \ref{lemma_sparse_sleep_vs_working},
\ref{lemma_sparse_sleep_n_awaken}, and
\ref{lemma_sparse_energy_bound} for the properties between the optimal
offline schedule and the schedule derived from $\MC{S}$, we prove an upper bound 
on the energy consumption $\MC{E}_{w,\sigma}(\MC{S}, 0, \infty)$ in the following lemma.

\begin{lemma} \label{lemma_sparse_energy_ratio}
When $2-\sqrt{3} \le \lambda \le 1$, we have 
\begin{align*}
 \MC{E}_{w,\sigma}(\MC{S},0,\infty) 
 \le \BIGP{4-\lambda}\cdot\MC{E}_{b,\sigma}(\MC{O},0,\infty) + 4\cdot\MC{E}_{w}(\MC{O},0,\infty).
\end{align*}
\end{lemma}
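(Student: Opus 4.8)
The plan is to charge the total wake-up-and-standby energy of $\MC{S}$ against the sleep intervals of $\MC{O}$ using the structural lemmas already established. First I would partition the time line according to the sleep intervals $U_1,\dots,U_m$ of $\MC{O}$ (recall we have implicitly concatenated $U_1$ and $U_m$), so that the busy-plus-standby time of $\MC{O}$ is exactly the complement of $\bigcup_u U_u$. By Lemma~\ref{lemma_sparse_sleep_n_awaken}, each sleep interval $U_u$ intersects at most two awaken-intervals of $\MC{S}$, so I can group the awaken-intervals $I_1,\dots,I_\kappa$ into blocks, each block associated with one sleep interval, and bound the energy of $\MC{S}$ block by block. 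Within each block I would invoke Lemma~\ref{lemma_sparse_energy_bound}: inequality~\eqref{eq:energy-pi=2} bounds $\MC{E}_{w,\sigma}(\MC{S},\ell_i,r_i)$ by $\psi_\sigma\sum_{j\in\MB{J}_i}c_j + (3-\lambda)E_w$, and the sharper bound~\eqref{eq:energy-pi=1} of $2E_w$ applies whenever $\MC{P}_i=1$.

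The two sources of charge have to be handled separately. For the term $\psi_\sigma\sum_{j\in\MB{J}_i}c_j$, every job executed by $\MC{S}$ is also executed by $\MC{O}$ for the same amount $c_j$, and in $\MC{O}$ the processor is busy during that time, so $\psi_\sigma\sum_i\sum_{j\in\MB{J}_i}c_j \le \MC{E}_{b,\sigma}(\MC{O},0,\infty)$; this already accounts for the ``$4-\lambda$'' coefficient partly, but the real work is to absorb the wake-up terms $(3-\lambda)E_w$ into the remaining $(3-\lambda)\MC{E}_{b,\sigma}(\MC{O})+4\MC{E}_w(\MC{O})$. Here I would distinguish awaken-intervals according to whether their endpoints fall inside sleep intervals of $\MC{O}$. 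If both $r_{i-1}$ and $\ell_i$ lie in a common sleep interval $U_u$, Lemma~\ref{lemma_sparse_sleep_vs_working} forces $\MC{P}_i=1$ (so only $2E_w$ is charged) and additionally gives $r_i - r_{u,opt}\ge(1-\lambda)\MC{B}$, meaning $I_i$ pokes a length at least $(1-\lambda)\MC{B}$ past the end of that sleep interval; this overhang lies in a region where $\MC{O}$ is busy/standby and can be charged against $\MC{E}_{b,\sigma}(\MC{O})$ at rate $\psi_\sigma$, yielding $\psi_\sigma(1-\lambda)\MC{B} = (1-\lambda)E_w$ of credit that, combined with the one wake-up $E_w$ that $\MC{O}$ itself pays to exit $U_u$, covers the $2E_w$. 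For awaken-intervals with $\MC{P}_i=2$, Lemma~\ref{lemma-awaken-interval-property} guarantees $\sum_{j\in\MB{J}_i}c_j\ge\lambda\MC{B}$, which is the lever needed to pay the extra $(3-\lambda)E_w = (3-\lambda)\psi_\sigma\MC{B}$: the workload of size $\ge\lambda\MC{B}$ is matched in $\MC{O}$, and together with the wake-ups $\MC{O}$ performs near these intervals (at most two, contributing $2E_w$ via the $4\MC{E}_w(\MC{O})$ budget) one can close the inequality — this is exactly where the hypothesis $\lambda\ge 2-\sqrt3$ enters, since it is the threshold that makes $(3-\lambda)\le (\text{credit from }\lambda\MC{B}\text{ workload charged at coefficient }4-\lambda) + 2$, i.e. $(3-\lambda)\le(4-\lambda)\lambda+2$ rearranges to $\lambda^2-4\lambda+1\le 0$, whose larger-root-free region is $\lambda\ge 2-\sqrt3$.

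So the skeleton is: (i) set up the block decomposition driven by sleep intervals of $\MC{O}$; (ii) for each block, classify its awaken-intervals by the position of their endpoints relative to $\bigcup U_u$ and apply the appropriate case of Lemma~\ref{lemma_sparse_energy_bound}; (iii) for every unit of wake-up energy spent by $\MC{S}$, exhibit a disjoint ``receipt'' — either a wake-up of $\MC{O}$ (bounded in number by Lemma~\ref{lemma_sparse_sleep_n_awaken}, absorbed into $4\MC{E}_w(\MC{O})$), or a slab of busy/standby time of $\MC{O}$ of length at least $(1-\lambda)\MC{B}$ or $\lambda\MC{B}$ guaranteed by Lemmas~\ref{lemma_sparse_sleep_vs_working} and~\ref{lemma-awaken-interval-property} (absorbed into $(4-\lambda)\MC{E}_{b,\sigma}(\MC{O})$); (iv) sum over all blocks and verify, using $\lambda\ge2-\sqrt3$, that the coefficients telescope to the claimed $(4-\lambda)$ and $4$. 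The main obstacle I anticipate is the bookkeeping in step~(iii): ensuring the receipts assigned to different awaken-intervals are genuinely disjoint (so no piece of $\MC{O}$'s busy time or wake-up is double-counted), especially at the seams where a single sleep interval of $\MC{O}$ is straddled by two awaken-intervals of $\MC{S}$, and at the wrap-around seam between $U_m$ and $U_1$. Handling the boundary intervals $I_1$ and $I_\kappa$ and the possibility that $\MC{S}$ never sleeps (a single awaken-interval) will need a small separate argument, but these are degenerate cases rather than the crux.
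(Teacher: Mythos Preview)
Your overall strategy matches the paper's: a case analysis on each awaken-interval $I_i$ according to how it meets the sleep intervals of $\MC{O}$, charging via Lemma~\ref{lemma_sparse_energy_bound}, and summing with Lemma~\ref{lemma_sparse_sleep_n_awaken} capping the multiplicity of wake-up charges. The quadratic $\lambda^2-4\lambda+1\le 0$ is indeed the right threshold, though your displayed inequality has a slip: the workload credit should be taken at coefficient $(3-\lambda)$, not $(4-\lambda)$, since one unit of the $(4-\lambda)$ budget is already spent on $\psi_\sigma\sum_j c_j$. The paper writes it as $1-\lambda\le\lambda(3-\lambda)$, which rearranges to your quadratic.

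Two concrete gaps. First, your overhang accounting for the case ``both $r_{i-1}$ and $\ell_i$ lie in $U_u$'' is wrong as written: $(1-\lambda)E_w + E_w = (2-\lambda)E_w < 2E_w$ for $\lambda>0$, so it does not cover the $2E_w$. The paper avoids this entirely: it never uses the overhang in this lemma, but simply charges the $2E_w$ directly to $2\cdot\MC{E}_w(\MC{O},U_u)$; since each sleep interval is touched by at most two awaken-intervals (Lemma~\ref{lemma_sparse_sleep_n_awaken}), each $U_u$ absorbs at most $4\cdot\MC{E}_w(\MC{O},U_u)$, which is exactly where the coefficient $4$ comes from. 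Second, you omit the case where $I_i$ meets no sleep interval at all. There the paper uses the algorithmic fact $r_i-\ell_i\ge\MC{B}$, so that $\MC{E}_{b,\sigma}(\MC{O},\ell_i,r_i)\ge\psi_\sigma\MC{B}=E_w$ and the $(3-\lambda)E_w$ term from Lemma~\ref{lemma_sparse_energy_bound} is absorbed into $(3-\lambda)\MC{E}_{b,\sigma}(\MC{O},\ell_i,r_i)$; without this observation your charging scheme has no receipt for such intervals.
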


\begin{proof}
Consider each of the following three exclusive cases of an awaken-interval $I_i$, for $1\le i\le \kappa$.

\smallskip

\noindent	\textbf{Case 1:}
      Interval $I_i$ does not intersect with any sleep interval of
      $\MC{O}$. Therefore, in the optimal schedule $\MC{O}$, the processor is not 
      off
      during time interval $(\ell_i,r_i)$.
      By our algorithm design, the length of an
      awaken-interval is at least $\MC{B}$. 
      Hence
      $\MC{E}_{b,\sigma}(\MC{O},\ell_i,r_i) \ge \MC{B}\cdot
      \psi_{\sigma} = E_w$.  By Lemma \ref{lemma_sparse_energy_bound},
      when $\lambda \leq 1$, we have
      \vspace{-4pt}
      \begin{equation}
        \label{eq:case1}
       \MC{E}_{w,\sigma}(\MC{S},\ell_i,r_i) \le \psi_{\sigma}\cdot\sum_{j\in \MB{J}_i}
      c_j+\BIGP{3-\lambda}\cdot\MC{E}_{b,\sigma}(\MC{O},\ell_i,r_i).
      \end{equation}

\vspace{-6pt}
\noindent \textbf{Case 2:}
		Interval $I_i$ intersects with a sleep interval $U_u = (\ell_{u,opt},r_{u,opt})$ and 
		$r_i$ is contained in interval  $U_u$,
		We have two subcases:
		(a) If $\MC{P}_i=1$, by Eq.~(\ref{eq:energy-pi=1}) in Lemma~\ref{lemma_sparse_energy_bound},
        we have $\MC{E}_{w,\sigma}(\MC{S},\ell_i,r_i) \le 2E_w =
        2\MC{E}_{w}(\MC{O},\ell_{u,opt},r_{u,opt})$, in which the
        equality comes by the definition of a sleep interval.
		%
		(b) If $\MC{P}_i = 2$, to provide an upper bound to
        Eq.~(\ref{eq:energy-pi=2}) in
        Lemma~\ref{lemma_sparse_energy_bound}, we use the energy
        consumption $\MC{E}_{w}(\MC{O},\ell_{u,opt},r_{u,opt})$ in
        the optimal schedule to bound $2E_w$, and 
        the energy
        consumption $\MC{E}_{b,\sigma}(\MC{O},r_{i-1},\ell_{u,opt})$
        in the optimal schedule to bound $(1-\lambda)E_w$. 
%
        We know that, when $2-\sqrt{3}\le
        \lambda\le 1$, we have
        \vspace{-6pt}
        \begin{equation}
          \label{eq:lambda}
          1-\lambda \leq \lambda(3-\lambda).          
          \vspace{-6pt}
        \end{equation}
        By Lemma~\ref{lemma-awaken-interval-property}, in $\MC{O}$, the
        workload that has to be done between $r_{i-1}$ and
        $\ell_{u,opt}$ is lower-bounded by $\lambda\MC{B}$. Since $r_i
        \geq \ell_{u,opt}$,  the  workload that has to be done between $r_{i-1}$ and
        $r_i$ is also lower-bounded by $\lambda\MC{B}$,  and hence,
        \vspace{-6pt}
        \begin{equation}
          \label{eq:lambda22}
          \MC{E}_{b,\sigma}(\MC{O},r_{i-1}, r_{i}) \ge
          \psi_{\sigma}\cdot\lambda\MC{B} = \lambda E_w,
          \vspace{-6pt}
        \end{equation}
        Hence, by Eq.~(\ref{eq:lambda}) and Eq.~(\ref{eq:lambda22}),
        we have
		$\BIGP{1-\lambda}E_w \le  \BIGP{3-\lambda}\cdot\MC{E}_{b,\sigma}(\MC{O},r_{i-1},r_i).$
%
		By rephrasing Eq.~(\ref{eq:energy-pi=2}) in
        Lemma~\ref{lemma_sparse_energy_bound}, for both subcases, 
        we have
        \vspace{-6pt}
		\begin{align}
		& \MC{E}_{w,\sigma}(\MC{S},\ell_i,r_i) \le \psi_{\sigma}\sum_{j\in \MB{J}_i} c_j + \BIGP{1-\lambda}\cdot E_w + 2\cdot E_w \nonumber\\
		\le \enskip & \psi_{\sigma}\cdot\sum_{j\in \MB{J}_i} c_j + \BIGP{3-\lambda}\cdot\MC{E}_{b,\sigma}(\MC{O},r_{i-1},r_i) 
		+ 2\cdot\MC{E}_{w}(\MC{O},\ell_{u,opt},r_{u,opt}).\label{eq:case2}
		\end{align}
		%

\noindent \textbf{Case 3:}
		$I_i$ intersects exactly with one sleep interval $U_u = (\ell_{u,opt}, r_{u,opt})$ and $\ell_i$ is contained in interval  $U_u$,
		then, by Lemma~\ref{lemma_sparse_sleep_vs_working}, we have $\MC{P}_i = 1$ and therefore
		\vspace{-6pt}
        \begin{equation}
          \label{eq:case3}
          \MC{E}_{w,\sigma}(\MC{S},\ell_i,r_i) \le 2\cdot E_w \le 2\cdot\MC{E}_{w}(\MC{O},\ell_{u,opt},r_{u,opt}). 
          \vspace{-6pt}
        \end{equation}
%
Provided the individual upper bounds for each awaken interval in Eq.~(\ref{eq:case1}), Eq.~(\ref{eq:case2}), and Eq.~(\ref{eq:case3}), we combine them to get an overall upper bound for $\MC{E}_{w,\sigma}(\MC{S},0,\infty)$, which is the summation over each awaken interval, $\sum_{1\le i\le \kappa}\MC{E}_{w,\sigma}(\MC{S},\ell_i,r_i)$.

First, since the awaken intervals are mutually disjoint, the item $\MC{E}_{b,\sigma}(\MC{O},r_{i-1},r_i)$ is counted at most once for each $i$, $1\le i\le \kappa$.
Note that, $\sum_{1\le i\le \kappa}\MC{E}_{b,\sigma}(\MC{O},r_{i-1},r_i) \le \MC{E}_{b,\sigma}(\MC{O},0,\infty)$.
%
Second, by Lemma \ref{lemma_sparse_sleep_n_awaken}, a sleep
interval $U_u$ intersects with at most two awaken-intervals of $\MC{S}$.
Hence, $\MC{E}_{w}(\MC{O},\ell_{u},r_u)$ is counted at most
twice. 
Therefore, when $2-\sqrt{3} \le \lambda \le 1$, we have
\vspace{-6pt}
\begin{align*}
& \MC{E}_{w,\sigma}(\MC{S},0,\infty) = \sum_{1\le i\le \kappa}\MC{E}_{w,\sigma}(\MC{S},\ell_i,r_i) \\
\le & \sum_{1\le i\le \kappa}\BIGP{\psi_{\sigma}\cdot\sum_{j\in \MB{J}_i} c_j} + \BIGP{3-\lambda}\cdot\sum_{1\le i\le \kappa}\MC{E}_{b,\sigma}(\MC{O},r_{i-1},r_i) \\
& \qquad \qquad \qquad \qquad \qquad \quad
+ 4\cdot\sum_{1\le u\le m}\MC{E}_{w}(\MC{O}, \ell_{u,opt}, r_{u,opt}) \\
\le & \BIGP{4-\lambda}\cdot\MC{E}_{b,\sigma}(\MC{O},0,\infty) + 4\cdot\MC{E}_{w}(\MC{O},0,\infty),
\end{align*}
where 
the last inequality comes from
the assumption that $\psi_{\sigma} \leq \psi_{b}$.
\qed
\end{proof}

\begin{theorem} \label{theorem-sparse-factor}
By setting $\lambda=1$, algorithm $\MC{S}$ computes a $4$-competitive schedule for any given set of jobs satisfying Condition~(\ref{eq-edf}) for the online power-minimizing scheduling problem.
\end{theorem}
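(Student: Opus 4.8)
The plan is to obtain the theorem by combining the feasibility of $\MC{S}$ with the energy bound already proved. Feasibility is immediate: the theorem assumes that the input set $\MC{J}$ satisfies Condition~(\ref{eq-edf}), so Lemma~\ref{lemma-feasibility} applies directly and $\MC{S}$ produces a feasible schedule. It therefore only remains to compare its energy consumption with that of an optimal offline schedule $\MC{O}$, which by Condition~(\ref{eq-edf}) we may take to run on a single processor, exactly as in the setup preceding Lemma~\ref{lemma_sparse_energy_ratio}.

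First I would instantiate Lemma~\ref{lemma_sparse_energy_ratio} with $\lambda=1$; this is legitimate since $2-\sqrt{3}\approx 0.27\le 1$, so $\lambda=1$ lies in the admissible range $[\,2-\sqrt{3},1\,]$. The lemma then gives
\[
\MC{E}_{w,\sigma}(\MC{S},0,\infty)\;\le\;3\cdot\MC{E}_{b,\sigma}(\MC{O},0,\infty)+4\cdot\MC{E}_{w}(\MC{O},0,\infty).
\]
The quantity bounded here is only the wake-up-plus-standby energy of $\MC{S}$, so to obtain the total energy $\MC{E}(\MC{S})=\MC{E}_{w,\sigma}(\MC{S},0,\infty)+\MC{E}_{b}(\MC{S},0,\infty)$ I still have to add in the job-execution energy $\MC{E}_{b}(\MC{S},0,\infty)$. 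Since $\MC{S}$ and $\MC{O}$ execute exactly the same total workload $\sum_{j\in\MC{J}}c_j$ at busy power $\psi_{b}$, and since $\psi_{\sigma}\le\psi_{b}$, we have $\MC{E}_{b}(\MC{S},0,\infty)=\psi_{b}\sum_{j\in\MC{J}}c_j=\MC{E}_{b}(\MC{O},0,\infty)\le\MC{E}_{b,\sigma}(\MC{O},0,\infty)$. Adding this to the displayed inequality yields
\[
\MC{E}(\MC{S})\;\le\;4\cdot\MC{E}_{b,\sigma}(\MC{O},0,\infty)+4\cdot\MC{E}_{w}(\MC{O},0,\infty)\;=\;4\cdot\MC{E}(\MC{O}),
\]
so $\MC{S}$ is $4$-competitive. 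This also explains why $\lambda=1$ is the right setting: the execution energy of $\MC{S}$ contributes one extra copy of $\MC{E}_{b,\sigma}(\MC{O})$, turning the $(4-\lambda)$ coefficient of Lemma~\ref{lemma_sparse_energy_ratio} into precisely $4$, whereas any smaller $\lambda$ would only worsen the ratio.

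The routine part is thus the substitution above; the point that needs the most care is reconciling this with the online notion of competitiveness from the problem definition, which asks that the energy spent up to every time $t$ be within a factor of $OPT(\MC{J}(t))$, not merely that the total energy be within a factor of $OPT(\MC{J})$. Here I would argue that all the structural ingredients feeding Lemma~\ref{lemma_sparse_energy_ratio} --- the awaken-interval decomposition together with Lemmas~\ref{lemma-awaken-interval-property}, \ref{lemma_sparse_sleep_vs_working}, \ref{lemma_sparse_sleep_n_awaken} and \ref{lemma_sparse_energy_bound} --- are prefix-stable: the decisions of $\MC{S}$ on $[0,t]$ depend only on $\MC{J}(t)$, and the charging of $\MC{S}$'s awaken-intervals to $\MC{O}$'s sleep-intervals restricted to $[0,t]$ is unaffected. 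Truncating every energy term at $t$ and rerunning the computation of Lemma~\ref{lemma_sparse_energy_ratio} then gives $\MC{E}(\MC{S},0,t)\le 4\cdot OPT(\MC{J}(t))$ for all $t\ge 0$. Verifying that the truncation does not corrupt any of the interval-charging steps --- in particular the implicit concatenation of the first and last sleep intervals $U_1$ and $U_m$ used in the proof of Lemma~\ref{lemma_sparse_energy_ratio} --- is the one step I expect to require genuine attention.
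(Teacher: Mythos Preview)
Your argument is essentially the paper's own proof: invoke Lemma~\ref{lemma-feasibility} for feasibility, combine Lemma~\ref{lemma_sparse_energy_ratio} with the trivial bound $\MC{E}_{b}(\MC{S},0,\infty)\le\MC{E}_{b,\sigma}(\MC{O},0,\infty)$ to get $\MC{E}(\MC{S})\le(5-\lambda)\,\MC{E}(\MC{O})$, and set $\lambda=1$. The only cosmetic difference is that the paper keeps $\lambda$ symbolic until the last line, whereas you substitute $\lambda=1$ first; and your final paragraph on prefix-stability of the charging scheme actually goes further than the paper, which silently proves only the total-energy bound and does not revisit the per-time-$t$ formulation of Definition~1.
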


\begin{proof}
The feasibility of $\MC{S}$ is guaranteed by Lemma~\ref{lemma-feasibility}.
For the competitive factor, let $\MC{E}(\MC{S})$ and $\MC{E}(\MC{O})$ denote the total
energy consumption for schedule $\MC{S}$ and $\MC{O}$, respectively.
First, we have $\MC{E}_{b}(\MC{S},0,\infty) \le \MC{E}_{b,\sigma}(\MC{O},0,\infty)$, since the workload to execute is the same in both $\MC{S}$ and $\MC{O}$.
Together with Lemma~\ref{lemma_sparse_energy_ratio}, we have
$\MC{E}(\MC{S}) 
= \MC{E}_{b}(\MC{S},0,\infty) + \MC{E}_{w,\sigma}(\MC{S},0,\infty) 
\le \BIGP{5-\lambda}\cdot\MC{E}_{b,\sigma}(\MC{O},0,\infty)+4\cdot\MC{E}_{w}(\MC{O},0,\infty) 
\le \BIGP{5-\lambda}\cdot\MC{E}(\MC{O})$,
provided that $2-\sqrt{3}\le\lambda\le 1$. 
By choosing $\lambda$ to be $1$, we have $\MC{E}(\MC{S}) \le 4\cdot\MC{E}(\MC{O})$ as claimed.
\qed
\end{proof}

\subsection{Jobs with Unit Execution Time}

We show that, when the execution times of the jobs are unit, we can benefit even more from the energy-efficient anchor with a properly chosen parameter $\lambda = 4-\sqrt{10}$.
The major difference is that, when the system is in urgency while $\MB{Q}^{t^*}_{forth}(t)$ is empty, i.e, processor $M_2$ is in 
standby,
we 
use a global earliest-deadline-first scheduling by executing two jobs on $M_1$ and $M_2$ instead of keeping one processor in standby, which
improves resource utilization.
Let $\MC{S}^\dagger$ denote the modified algorithm.
As a result, both processors will keep running until the global ready queue is empty.
Hence, we know that, when $\MC{P}_i = 2$ for an awaken-interval $I_i$ of $\MC{S}^\dagger$, 
the total standby time in $I_i$ is at most $\BIGP{1-\frac{\lambda}{2}}\MC{B}$.
The following two lemmas are the updated versions of Lemma~\ref{lemma_sparse_energy_bound} and Lemma~\ref{lemma_sparse_energy_ratio}, respectively.

\begin{lemma} \label{lemma_sparse_unit_energy_bound}
When the jobs have unit execution times, for each awaken-interval $I_i$, $1\le i\le k$, we have $\MC{E}_{w,\sigma}(\MC{S}^\dagger,\ell_i,r_i) \le \BIGP{3-\frac{\lambda}{2}}E_w$.
%
\end{lemma}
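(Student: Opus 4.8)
The plan is to mirror the structure of the proof of Lemma~\ref{lemma_sparse_energy_bound}, but to exploit the new observation stated just before the lemma: when $\MC{P}_i = 2$ in an awaken-interval $I_i$ of $\MC{S}^\dagger$, the total standby time inside $I_i$ is at most $\BIGP{1-\frac{\lambda}{2}}\MC{B}$, rather than (roughly) $\BIGP{3-\lambda}\MC{B}$ as in the general case. First I would recall that $\MC{E}_{w,\sigma}(\MC{S}^\dagger,\ell_i,r_i)$ decomposes into the wake-up energy, which is $\MC{P}_i\cdot E_w \le 2E_w$ by the unit-job analogue of Lemma~\ref{lemma-awaken-interval-property} (at most two turn-ons per awaken-interval), plus the standby energy $\psi_\sigma$ times the total standby time in $I_i$.

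For the case $\MC{P}_i = 1$, the argument is the same as in Lemma~\ref{lemma_sparse_energy_bound}: by the turn-off rule (C)(b), once the system enters standby it waits exactly $\MC{B}$ units before powering down, so the standby energy is at most $\psi_\sigma\MC{B} = E_w$, giving $\MC{E}_{w,\sigma}(\MC{S}^\dagger,\ell_i,r_i) \le 2E_w \le \BIGP{3-\frac{\lambda}{2}}E_w$ since $\lambda \le 1$. For the case $\MC{P}_i = 2$, I would combine the $2E_w$ wake-up cost with the new standby bound: the standby time is at most $\BIGP{1-\frac{\lambda}{2}}\MC{B}$, so the standby energy is at most $\psi_\sigma\cdot\BIGP{1-\frac{\lambda}{2}}\MC{B} = \BIGP{1-\frac{\lambda}{2}}E_w$, and adding the two terms yields $\MC{E}_{w,\sigma}(\MC{S}^\dagger,\ell_i,r_i) \le \BIGP{3-\frac{\lambda}{2}}E_w$, as claimed. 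Since both cases fall under the stated bound, this completes the argument.

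The main obstacle — and the step that needs the most care — is justifying the claimed standby bound $\BIGP{1-\frac{\lambda}{2}}\MC{B}$ in the $\MC{P}_i = 2$ case, i.e., that the modified scheduling rule genuinely limits standby time this way. The key point is that with unit jobs and the global EDF rule running two jobs on $M_1$ and $M_2$ whenever $\MB{Q}^{t^*}_{forth}(t)$ is empty, the two processors drain the ready queue at twice the rate, so once $M_2$ is switched on at time $t^*$ the combined running phase is short; standby can only occur afterwards, and the turn-off rule caps the trailing standby portion. I would need to argue that the workload forcing the second processor on (which is at least $\lambda\MC{B}$ by Lemma~\ref{lemma-awaken-interval-property}) is processed quickly enough that the residual standby before the $\MC{B}$-timeout-triggered shutdown, together with any standby on the single-processor tail, does not exceed $\BIGP{1-\frac{\lambda}{2}}\MC{B}$ — this is where unit execution times are essential, since it guarantees no processor sits idle mid-interval while a job is only partially schedulable. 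Once that running-time bookkeeping is pinned down, the rest is the routine arithmetic above.
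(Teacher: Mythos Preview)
Your proposal is correct and follows essentially the same approach as the paper: decompose $\MC{E}_{w,\sigma}$ into at most $2E_w$ of wake-up energy plus $\psi_\sigma$ times the total standby time, then invoke the bound $(1-\tfrac{\lambda}{2})\MC{B}$ on standby to get $(3-\tfrac{\lambda}{2})E_w$. The paper's proof is even terser---it does not separate the cases $\MC{P}_i=1$ and $\MC{P}_i=2$---and it treats the standby bound $(1-\tfrac{\lambda}{2})\MC{B}$ as already established by the discussion immediately preceding the lemma, so you need not re-derive it in the proof itself.
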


\begin{lemma} \label{lemma_sparse_unit_energy_ratio}
When $7-\sqrt{41} \le \lambda \le 4-\sqrt{10}$, we have
\begin{align*}
\MC{E}_{w,\sigma}(\MC{S}^\dagger,0,\infty) \le \BIGP{3-\frac{\lambda}{2}}\cdot\MC{E}_{b,\sigma}(\MC{O},0,\infty) + \BIGP{4-\frac{\lambda}{2}}\cdot\MC{E}_{w}(\MC{O},0,\infty).
\end{align*}
\end{lemma}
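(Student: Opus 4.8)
The plan is to retrace the proof of Lemma~\ref{lemma_sparse_energy_ratio} almost line by line, replacing the per‑awaken‑interval estimate of Lemma~\ref{lemma_sparse_energy_bound} by the sharper one of Lemma~\ref{lemma_sparse_unit_energy_bound}, and exploiting the improved standby bound available for $\MC{S}^\dagger$. First I would note that the modification defining $\MC{S}^\dagger$ does not affect the structural Lemmas~\ref{lemma-awaken-interval-property}, \ref{lemma_sparse_sleep_vs_working}, and \ref{lemma_sparse_sleep_n_awaken}: these speak only about when processors are switched on or off and about how much workload must be finished by which deadline, not about the reassignment of jobs between $M_1$ and $M_2$ once the system is in urgency, so they remain available. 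I would also record, as in Eq.~(\ref{eq:energy-pi=1}), that whenever $\MC{P}_i=1$ the single processor is in standby only while its ready queue is empty, and that the turn‑off rule~(C)(b) caps the total length of such standby by $\MC{B}$, hence $\MC{E}_{w,\sigma}(\MC{S}^\dagger,\ell_i,r_i)\le 2E_w$ in that case.

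Next, exactly as in Lemma~\ref{lemma_sparse_energy_ratio}, I would split the awaken‑intervals $I_i$ of $\MC{S}^\dagger$ into the three exclusive cases: (1) $I_i$ meets no sleep interval of $\MC{O}$; (2) $r_i$ lies in a sleep interval $U_u$; (3) $I_i$ meets exactly one sleep interval $U_u$ and $\ell_i\in U_u$. Case~1 is verbatim the old one, using $\MC{E}_{b,\sigma}(\MC{O},\ell_i,r_i)\ge\psi_{\sigma}\MC{B}=E_w$ and Lemma~\ref{lemma_sparse_unit_energy_bound} to get $\MC{E}_{w,\sigma}(\MC{S}^\dagger,\ell_i,r_i)\le(3-\tfrac{\lambda}{2})\,\MC{E}_{b,\sigma}(\MC{O},\ell_i,r_i)$. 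In Case~2 with $\MC{P}_i=1$ one charges $\MC{E}_{w,\sigma}(\MC{S}^\dagger,\ell_i,r_i)\le 2E_w=2\,\MC{E}_w(\MC{O},\ell_{u,opt},r_{u,opt})$; with $\MC{P}_i=2$, Lemma~\ref{lemma_sparse_unit_energy_bound} gives the bound $2E_w+(1-\tfrac{\lambda}{2})E_w$, and since (by Lemma~\ref{lemma-awaken-interval-property}) the $\ge\lambda\MC{B}$ workload arriving after $r_{i-1}$ and due before $r_i$ must be executed by $\MC{O}$ before $\ell_{u,opt}$, we have $\MC{E}_{b,\sigma}(\MC{O},r_{i-1},r_i)\ge\lambda E_w$ and charge the standby part $(1-\tfrac{\lambda}{2})E_w$ to it with coefficient $3-\tfrac{\lambda}{2}$; this is legitimate precisely when $1-\tfrac{\lambda}{2}\le\lambda\,(3-\tfrac{\lambda}{2})$, one of the two quadratic constraints. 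In Case~3, Lemma~\ref{lemma_sparse_sleep_vs_working} gives $\MC{P}_i=1$ and, crucially, $r_i-r_{u,opt}\ge(1-\lambda)\MC{B}$; as $I_i$ meets no sleep interval other than $U_u$, $\MC{O}$ is active throughout $(r_{u,opt},r_i)$, so $\MC{E}_{b,\sigma}(\MC{O},r_{u,opt},r_i)\ge(1-\lambda)E_w$, and I would split $\MC{E}_{w,\sigma}(\MC{S}^\dagger,\ell_i,r_i)\le 2E_w$ as $(2-\tfrac{\lambda}{2})E_w$ charged to $\MC{E}_w(\MC{O},\ell_{u,opt},r_{u,opt})$ plus $\tfrac{\lambda}{2}E_w$ charged to $\MC{E}_{b,\sigma}(\MC{O},r_{u,opt},r_i)$ with coefficient $3-\tfrac{\lambda}{2}$; this redistribution is valid precisely when $\tfrac{\lambda}{2}\le(1-\lambda)(3-\tfrac{\lambda}{2})$, the second constraint, whose binding case is $\lambda=4-\sqrt{10}$.

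It then remains to sum the per‑interval bounds. The busy‑energy terms $\MC{E}_{b,\sigma}(\MC{O},\ell_i,r_i)$ (Case~1), $\MC{E}_{b,\sigma}(\MC{O},r_{i-1},r_i)$ (Case~2), and $\MC{E}_{b,\sigma}(\MC{O},r_{u,opt},r_i)$ (Case~3) are pairwise disjoint subintervals of the time $\MC{O}$ is not off, exactly as in Lemma~\ref{lemma_sparse_energy_ratio}, so they contribute at most $(3-\tfrac{\lambda}{2})\,\MC{E}_{b,\sigma}(\MC{O},0,\infty)$. For the wake‑up terms I would show that each sleep interval $U_u$ receives charge from at most one Case~2 and at most one Case~3 awaken‑interval: by Lemma~\ref{lemma_sparse_sleep_n_awaken} at most two awaken‑intervals meet $U_u$, and if two do they are consecutive $I_i,I_{i+1}$ with $r_i,\ell_{i+1}\in U_u$; one cannot have both $r_i\in U_u$ and $r_{i+1}\in U_u$, since that would force $\ell_{i+1},r_{i+1}\in U_u$ and, using the minimum length $\MC{B}$ of an awaken‑interval together with the fact that $\MC{S}^\dagger$ is in standby only when $\MB{Q}(t)=\emptyset$ (which identifies the job that switches $M_1$ on at $\ell_{i+1}$ and forces $r_{i+1}-\ell_{i+1}<\MC{B}$), a contradiction. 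Hence $I_i$ is Case~2 and $I_{i+1}$ is Case~3, so $U_u$ absorbs at most $2E_w+(2-\tfrac{\lambda}{2})E_w=(4-\tfrac{\lambda}{2})E_w$, and the wake‑up terms contribute at most $(4-\tfrac{\lambda}{2})\,\MC{E}_w(\MC{O},0,\infty)$; summing yields the claim for every $\lambda$ with $7-\sqrt{41}\le\lambda\le 4-\sqrt{10}$.

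I expect the main obstacle to be exactly this last step: verifying that the redirected fractions of $\MC{E}_{w,\sigma}(\MC{S}^\dagger,\ell_i,r_i)$ never over‑count a piece of $\MC{O}$'s busy time, and that no sleep interval of $\MC{O}$ absorbs more than $(4-\tfrac{\lambda}{2})E_w$. The latter needs the degenerate configuration of two "Case~2" awaken‑intervals nested inside one sleep interval of $\MC{O}$ to be excluded — this is the one place where one genuinely uses the uniform $(3-\tfrac{\lambda}{2})E_w$ bound of Lemma~\ref{lemma_sparse_unit_energy_bound} together with the $\MC{B}$‑periodicity of $\MC{S}^\dagger$'s turn‑offs, rather than anything already present in the general‑execution‑time analysis of Lemma~\ref{lemma_sparse_energy_ratio}.
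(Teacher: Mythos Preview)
Your plan follows the paper's proof almost verbatim: the same three-case split of awaken-intervals, the same per-case charging (Case~1 to $\MC{E}_{b,\sigma}(\MC{O},\ell_i,r_i)$; Case~2 split as $2E_w$ to $\MC{E}_w(\MC{O},U_u)$ plus $(1-\tfrac{\lambda}{2})E_w$ to $\MC{E}_{b,\sigma}(\MC{O},r_{i-1},r_i)$; Case~3 split as $(2-\tfrac{\lambda}{2})E_w$ to $\MC{E}_w(\MC{O},U_u)$ plus $\tfrac{\lambda}{2}E_w$ to $\MC{E}_{b,\sigma}(\MC{O},r_{u,opt},r_i)$), and the same two quadratic constraints producing the endpoints of the $\lambda$-range. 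You are also right that to obtain the coefficient $4-\tfrac{\lambda}{2}$ in front of $\MC{E}_w(\MC{O},0,\infty)$ one must argue that each sleep interval of $\MC{O}$ is hit by at most one Case-2 charge and at most one Case-3 charge, not merely ``at most two charges'' as in Lemma~\ref{lemma_sparse_energy_ratio}.

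Your argument for that last point, however, does not work as written: you claim that two Case-2 charges to a single $U_u$ would force $r_{i+1}-\ell_{i+1}<\MC{B}$, but every awaken-interval has length at least $\MC{B}$ by construction, so no contradiction arises along those lines. The correct argument is much simpler and needs nothing specific to unit execution times or to Lemma~\ref{lemma_sparse_unit_energy_bound}: if $U_u$ meets two consecutive awaken-intervals then $r_i,\ell_{i+1}\in U_u$, and Lemma~\ref{lemma_sparse_sleep_vs_working} applied to $I_{i+1}$ gives $r_{i+1}-r_{u,opt}\ge(1-\lambda)\MC{B}>0$ (using $\lambda\le 4-\sqrt{10}<1$), so $r_{i+1}\notin U_u$ and $I_{i+1}$ cannot be Case~2 with respect to $U_u$. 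Your closing paragraph therefore misidentifies the key ingredient; it is Lemma~\ref{lemma_sparse_sleep_vs_working}, already available from the general analysis, that rules out the degenerate configuration.
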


By Lemma~\ref{lemma_sparse_unit_energy_ratio} with $\lambda$ chosen to be $4-\sqrt{10}$, we get $\MC{E}(\MC{S}^\dagger) \le \BIGP{4-\frac{\lambda}{2}}\cdot\MC{E}(\MC{O}) < 3.59\cdot\MC{E}(\MC{O})$.
We have the following theorem.

\begin{theorem} \label{thm-unit-jobs}
By setting $\lambda = 4-\sqrt{10}$, algorithm $\MC{S}^\dagger$ computes a $3.59$-competitive schedule for any given set of jobs with execution time that satisfies Condition~(\ref{eq-edf}) for the online power-minimizing scheduling problem.
\end{theorem}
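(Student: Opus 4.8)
The plan is to mirror the three-part argument behind Theorem~\ref{theorem-sparse-factor}, substituting each ingredient by its unit-job counterpart. First I would record feasibility, i.e.\ a unit-job analog of Lemma~\ref{lemma-feasibility}: the only way $\MC{S}^\dagger$ differs from $\MC{S}$ is that, while the urgency flag is \emph{true} and $\MB{Q}^{t^*}_{forth}(t)$ is empty, $\MC{S}^\dagger$ runs earliest-deadline-first \emph{globally} on $\MB{Q}^{t^*}_{proc}(t)$ across both $M_1$ and $M_2$ rather than leaving $M_2$ in standby. Since the jobs are unit and $\MB{Q}^{t^*}_{proc}(t)$ is already (by the partition used in the proof of Lemma~\ref{lemma-feasibility}) schedulable on a single processor by EDF, executing two of its ready jobs in parallel only drains the queue faster and can never cause a deadline miss. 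The same modification also yields the structural fact quoted before Lemma~\ref{lemma_sparse_unit_energy_bound}: once $\MC{P}_i=2$ on an awaken-interval $I_i$, both processors stay busy until the queue empties, and since the workload that triggered $\MC{P}_i=2$ is at least $\lambda\MC{B}$ (the unit-job analog of Lemma~\ref{lemma-awaken-interval-property}) and is now served at rate two, the total standby time inside $I_i$ is at most $\BIGP{1-\frac{\lambda}{2}}\MC{B}$; this is precisely what Lemma~\ref{lemma_sparse_unit_energy_bound} rests on (two wake-ups plus at most $\BIGP{1-\frac{\lambda}{2}}\psi_{\sigma}\MC{B}=\BIGP{1-\frac{\lambda}{2}}E_w$ of standby, with the $\psi_{\sigma}\sum_{j\in\MB{J}_i}c_j$ term of Eq.~(\ref{eq:energy-pi=2}) vanishing because $M_2$ is never idle while $M_1$ executes).

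Granting Lemmas~\ref{lemma_sparse_unit_energy_bound} and \ref{lemma_sparse_unit_energy_ratio}, the theorem follows by a short computation exactly as in Theorem~\ref{theorem-sparse-factor}. Since $\MC{S}^\dagger$ and $\MC{O}$ run the same total workload, $\MC{E}_{b}(\MC{S}^\dagger,0,\infty)\le\MC{E}_{b,\sigma}(\MC{O},0,\infty)$; adding this to Lemma~\ref{lemma_sparse_unit_energy_ratio}, applied with any admissible $\lambda\in[\,7-\sqrt{41},\,4-\sqrt{10}\,]$, yields
\[
\MC{E}(\MC{S}^\dagger)=\MC{E}_{b}(\MC{S}^\dagger,0,\infty)+\MC{E}_{w,\sigma}(\MC{S}^\dagger,0,\infty)
\le\BIGP{4-\frac{\lambda}{2}}\cdot\MC{E}_{b,\sigma}(\MC{O},0,\infty)+\BIGP{4-\frac{\lambda}{2}}\cdot\MC{E}_{w}(\MC{O},0,\infty)
=\BIGP{4-\frac{\lambda}{2}}\cdot\MC{E}(\MC{O}).
\]
As $4-\frac{\lambda}{2}$ is decreasing in $\lambda$, I would take $\lambda$ as large as the admissible range permits, namely $\lambda=4-\sqrt{10}$ (legitimate since $7-\sqrt{41}<4-\sqrt{10}<1$), which gives $4-\frac{\lambda}{2}=2+\frac{\sqrt{10}}{2}=\frac{4+\sqrt{10}}{2}<3.59$, as claimed.

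The real content, and where I expect the main difficulty, is Lemma~\ref{lemma_sparse_unit_energy_ratio}. I would re-run the three-case charging scheme from the proof of Lemma~\ref{lemma_sparse_energy_ratio}, now driven by the sharper per-interval bound $\BIGP{3-\frac{\lambda}{2}}E_w$ of Lemma~\ref{lemma_sparse_unit_energy_bound}: Case~1 ($I_i$ misses every sleep interval of $\MC{O}$, so $\MC{E}_{b,\sigma}(\MC{O},\ell_i,r_i)\ge\MC{B}\psi_{\sigma}=E_w$ absorbs the whole per-interval bound); Case~2 ($r_i$ lies in a sleep interval $U_u$, split according to whether $\MC{P}_i=1$ or $\MC{P}_i=2$); and Case~3 ($\ell_i$ lies in a sleep interval, which by the $\MC{S}^\dagger$-analog of Lemma~\ref{lemma_sparse_sleep_vs_working} forces $\MC{P}_i=1$). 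Lemmas~\ref{lemma_sparse_sleep_vs_working} and \ref{lemma_sparse_sleep_n_awaken} carry over to $\MC{S}^\dagger$ essentially verbatim, since the switch-on/switch-off rules are unchanged. The delicate point — the analog of Eq.~(\ref{eq:lambda}) in the original proof — is the subcase $\MC{P}_i=2$ of Case~2: one must split $\BIGP{3-\frac{\lambda}{2}}E_w$ between a charge to $\MC{E}_{w}(\MC{O},U_u)$ (each such term charged to at most two awaken-intervals by Lemma~\ref{lemma_sparse_sleep_n_awaken}, hence with a per-interval coefficient small enough that the total is $4-\frac{\lambda}{2}$ rather than the naive $4$) and a charge to the $\ge\lambda\MC{B}$ of workload that $\MC{O}$ must also run between $r_{i-1}$ and $r_i$; forcing this split to close with the stated coefficients $3-\frac{\lambda}{2}$ and $4-\frac{\lambda}{2}$ produces a quadratic inequality in $\lambda$ whose binding root is $7-\sqrt{41}$, together with a companion constraint that caps $\lambda$ at $4-\sqrt{10}$, and these together pin down the admissible interval. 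Carrying out this tighter accounting — and confirming that the upper endpoint $\lambda=4-\sqrt{10}$ is exactly the value minimizing $4-\frac{\lambda}{2}$ within it — is the crux; the remaining algebra is routine.
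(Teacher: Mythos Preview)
Your proposal is correct and follows essentially the same approach as the paper: the theorem itself is derived from Lemma~\ref{lemma_sparse_unit_energy_ratio} by the one-line computation you give, and your sketch of how to prove Lemma~\ref{lemma_sparse_unit_energy_ratio}---re-running the three-case charging from Lemma~\ref{lemma_sparse_energy_ratio} with the sharper per-interval bound of Lemma~\ref{lemma_sparse_unit_energy_bound}, and noting that the Case~2(b) split yields the lower constraint $\lambda\ge 7-\sqrt{41}$ while Case~3 yields the upper constraint $\lambda\le 4-\sqrt{10}$---matches the paper's own argument in the appendix.
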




\subsection{Trading the Number of Processors with the Energy-Efficiency} \label{sec-trade-off}

We have shown how a stream of jobs satisfying Condition~(\ref{eq-edf}) can be scheduled online to obtain a $4$-competitive schedule which uses at most two processors.
By collecting the delayed jobs and bundling their execution, we can generalize the algorithm to a prescribed collection of job streams, each of which satisfies Condition~(\ref{eq-edf}), to allow a trade-off between the number of processors we use and the energy-efficiency of the resulting schedule.
%

%
Let $\MC{J}_1, \MC{J}_2, \ldots, \MC{J}_k$ be the given collection of job streams, where $\MC{J}_i$ satisfies Condition~(\ref{eq-edf}) for all $1\le i\le k$.
Below we show that, for any fixed $h$, $h > k$, how a $\BIGP{4\cdot\max\BIGBP{\CEIL{\frac{k}{h-k}},1}}$-competitive schedule which uses at most $h$ processors can be obtained.
First, if $h \ge 2k$, then we simply apply the algorithm $\MC{S}$ on every pair of the streams, i.e., on $\MC{J}_{2i}$ and $\MC{J}_{2i+1}$, for all $1\le i \le \frac{k}{2}$, and we get a $4$-competitive schedule.
For the case $k < h < 2k$, we divide $\MC{J}_1, \MC{J}_2, \ldots, \MC{J}_k$ into $h-k$ subsets such that each subset gets at most $\CEIL{k / (h-k)}$ streams. The $h$ processors are allocated in the following way. Each stream of jobs gets one processor, and the remaining $h-k$ processors are equally allocated to each of the $h-k$ subsets.
%
%
%
Then, the problem is reduced to 
the remaining case, $h = k+1$.

Below we describe how the case $h = k+1$ is handled.
Let $M_0, M_1, \ldots, M_k$ denote the $k+1$ processors to be managed, and let $\MB{Q}_0, \MB{Q}_1, \ldots, \MB{Q}_k$ be the corresponding subset of jobs that are scheduled to be executed on each processor.
We use the parameter $\lambda = 1$ to set the energy-efficient anchor for each job that arrives.
Let $W_0(t,t^\dagger) = \sum_{j \in \MB{Q}_0,d_j \le t^\dagger}c^\prime_j(t)$ denote the total remaining execution time of the jobs in $\MB{Q}_0$ whose deadlines are less than or equal to $t^\dagger$, for any time $t$ and any $t^\dagger$ with $t^\dagger \ge t$.
The algorithm 
works as follows. When a job $j \in \MC{J}_i$, $1\le i\le k$, arrives, 
if $M_i$ is on, then we add $j$ to $\MB{Q}_i$. Otherwise, we further check whether $W_0(t, t^\dagger) + c_j \le t^\dagger-t$ holds for all $t^\dagger \ge d_j$. If it does, then $j$ is added to $\MB{Q}_0$. Otherwise, we add $j$ to $\MB{Q}_i$ and turn on the processors $M_i$ and $M_0$ (if $M_0$ is off).
At any time $t$, we 
have the following cases to consider.

\begin{compactitem}
\item
First, regarding the further conditions to turn on $M_0$, 
%
if $M_0$ is off, and if the energy-efficient anchor of some job in $\MB{Q}_0$ is met or there exists some $t^\dagger \ge t$ such that $W_0(t,t^\dagger) \ge t^\dagger-t$, 
then we turn on $M_0$.

\item
Second, regarding the scheduling,
for each $0\le i\le k$ such that $M_i$ is on, we use the earliest-deadline-first principle to schedule the jobs of $\MB{Q}_i$ on $M_i$.

\item
Finally, regarding the conditions for turning off the processor, when $\MB{Q}_0$ becomes empty and $M_0$ has been turned on for at least $\MC{B}$ amount of time, we turn off $M_0$. For $1\le i\le k$, if $M_i$ is on, $\MB{Q}_i$ becomes empty, and $M_0$ is off, then we turn off $M_i$.
\end{compactitem}

\begin{theorem} \label{thm-multi-core-factor}
Given a collection of $k$ job sets, each satisfying Condition~(\ref{eq-edf}), we can compute a $\BIGP{4\cdot\max\BIGBP{\CEIL{\frac{k}{h-k}},1}}$-competitive schedule 
that uses at most $h$ processors, where $h >k$, for the online power-minimizing scheduling problem.
\end{theorem}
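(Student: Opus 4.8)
The plan is to reduce the general case $k < h$ to the single decisive case $h = k+1$, exactly along the lines already sketched in the text, and then prove the claimed bound for $h = k+1$ directly by adapting the analysis of algorithm $\MC{S}$. For the reduction: if $h \ge 2k$ we pair up the streams and run $\MC{S}$ on each pair; since each pair needs at most two processors (Theorem~\ref{theorem-sparse-factor}) and different pairs never interact, the resulting schedule uses at most $h$ processors and each pair is $4$-competitive against the optimal uniprocessor schedule of that pair, hence $4$-competitive overall by summing energies. For $k < h < 2k$, we partition the $k$ streams into $h-k$ groups of size at most $\CEIL{k/(h-k)}$, give every stream its own dedicated processor, and give each group one extra shared processor; the analysis of the $h=k+1$ case (applied group-by-group with $k' \le \CEIL{k/(h-k)}$) then gives a $4k'$-competitive schedule per group and hence a $\BIGP{4\cdot\max\{\CEIL{k/(h-k)},1\}}$-competitive schedule overall. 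So everything comes down to: with $k+1$ processors $M_0,\dots,M_k$ and the described rule for routing each arriving job either to its home processor $M_i$ or to the shared ``procrastination'' processor $M_0$, the energy is at most $4k$ times optimal.

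For the core case $h = k+1$, I would first argue feasibility. A job $j \in \MC{J}_i$ is routed to $M_0$ only when the test $W_0(t,t^\dagger) + c_j \le t^\dagger - t$ holds for all $t^\dagger \ge d_j$; I would show by induction over arrivals that this test, combined with the fact that each $\MC{J}_i$ individually satisfies Condition~(\ref{eq-edf}), guarantees that at every moment the jobs routed to $M_0$ continue to satisfy the EDF-schedulability condition of Lemma~\ref{lemma_offline_condition} on the single processor $M_0$ (the subtlety is that $M_0$'s workload is a merge of subsequences from several streams, so I must check that the interval-load condition is preserved — this follows because a job is admitted to $M_0$ only if it does not violate any future deadline window given the jobs already committed there, and once on $M_0$ a job's remaining work only decreases). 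Jobs routed to a home processor $M_i$ form a subset of $\MC{J}_i$, so they are EDF-schedulable on $M_i$ by assumption. Hence the whole schedule is feasible.

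For the competitive ratio, the key observation is that $M_0$ together with all the ``turn on $M_i$ and $M_0$'' events behaves exactly like processors $M_2$ and $M_1$ in algorithm $\MC{S}$, except that now the ``urgent'' overflow can come from any of the $k$ streams rather than just one. I would define awaken-intervals of this combined schedule as in \S~\ref{sec-sparse-analysis} and re-run Lemmas~\ref{lemma-awaken-interval-property}–\ref{lemma_sparse_energy_ratio} with $\lambda = 1$: an awaken-interval with $\MC{P}_i = 2$ now carries workload at least $\lambda\MC{B} = \MC{B}$ that arrived after the previous awaken-interval ended and is due before the current one does (this uses that $M_0$ is turned on at $t$ only when $W_0(t,t^\dagger)\ge t^\dagger - t$, i.e. a genuine $\MC{B}$-sized batch). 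Charging the wake-up energy of each awaken-interval against either the sleep-interval structure of the optimal uniprocessor schedule $\MC{O}$ of $\MC{J}_i$ (when $I_i$ sits inside a sleep interval) or against its busy/standby energy (otherwise), exactly as in the proof of Lemma~\ref{lemma_sparse_energy_ratio}, yields that the standby-plus-wakeup energy of the combined schedule is at most $\BIGP{4-\lambda}\MC{E}_{b,\sigma}(\MC{O}) + 4\,\MC{E}_{w}(\MC{O})$ with $\lambda=1$, so the total energy of this group of streams is at most $4\,\OPT$. Summing over the $h-k$ groups and noting each group's optimum is a lower bound on the global optimum restricted to that group gives the $\BIGP{4\cdot\max\{\CEIL{k/(h-k)},1\}}$ bound.

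The main obstacle I anticipate is the feasibility argument for $M_0$ in the $h = k+1$ case: unlike in algorithm $\MC{S}$, where $M_2$ receives a contiguous batch from a single Condition-(\ref{eq-edf})-respecting stream, here $M_0$ accumulates a dynamically chosen subsequence from $k$ different streams, and I must verify that the admission test $W_0(t,t^\dagger)+c_j \le t^\dagger - t\ \forall\, t^\dagger \ge d_j$ is exactly strong enough to keep the merged set EDF-feasible on one processor at all future times (including after more admissions), while still being weak enough that a job is rejected from $M_0$ only when $M_i$ genuinely must wake — which is what makes the $\MC{P}_i = 2 \Rightarrow$ workload $\ge \MC{B}$ accounting go through. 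Once that invariant is nailed down, the energy bookkeeping is a direct transcription of Lemma~\ref{lemma_sparse_energy_ratio} with $\lambda = 1$.
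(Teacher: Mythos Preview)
Your reduction to the $h=k+1$ case and the feasibility argument are fine and match the paper. The gap is in the competitive-ratio analysis for the $h=k+1$ case. You propose to define awaken-intervals of the \emph{combined} system $M_0,M_1,\dots,M_k$ and ``re-run Lemmas~\ref{lemma-awaken-interval-property}--\ref{lemma_sparse_energy_ratio}'', asserting that an awaken-interval with $\MC{P}_i=2$ carries workload $\ge\MC{B}$. But Lemma~\ref{lemma-awaken-interval-property} does \emph{not} survive this globalisation: within a single awaken-interval of $M_0$, several different streams can overflow, so several of the $M_i$ get switched on, and the number of turn-ons $\MC{P}_i$ can be as large as $k+1$, not $2$. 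Consequently Lemma~\ref{lemma_sparse_energy_bound} (whose whole point is to bound the wake-up plus standby energy by at most $(3-\lambda)E_w$ per interval) collapses, and your conclusion ``total energy of this group is at most $4\,\OPT$'' is unsupported. Indeed, if that conclusion held, the theorem's bound would be $4$, not $4\CEIL{k/(h-k)}$; the factor $\CEIL{k/(h-k)}$ has to come from somewhere, and your final summation over the $h-k$ groups does not produce it.

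The paper obtains the $k'=\CEIL{k/(h-k)}$ factor from \emph{inside} each group, not from the inter-group summation. The idea you are missing is to fix one stream index $i$ and look only at the \emph{pair} $(M_0,M_i)$; for that pair the behaviour really does mimic algorithm~$\MC{S}$ (one primary processor, at most one auxiliary turn-on per awaken-interval), so the lemmas of \S\ref{sec-sparse-analysis} apply and give $E(M_0)+E(M_i)\le 4\cdot\OP{OPT}_{\text{group}}$. Doing this for each of the $k'$ values of $i$ and summing yields $k'\cdot E(M_0)+\sum_i E(M_i)\le 4k'\cdot\OP{OPT}_{\text{group}}$, and since the left-hand side dominates the group's actual total energy $E(M_0)+\sum_i E(M_i)$, the $4k'$ bound follows. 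In short: the charging is per pair $(M_0,M_i)$, and the extra $k'$ comes from overcounting $M_0$'s energy $k'$ times, not from any global awaken-interval argument.
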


\section{Conclusion} \label{section_conclusion}

We conclude with a brief 
overview on future research directions.
%
With the advances of technology, racing-to-idle has become an important scheduling model for energy efficiency. 
%
%
For online settings, it still remains
open how a balance between (1) the extent to which the executions of the jobs should be postponed, and (2) the extent to which the system should speed up
for impulse arrivals, can be reached or even be traded with other parameters.
We believe that this will be a very interesting research direction to explore.



\bibliographystyle{plain}
\bibliography{energy_short_bib}


\newpage

\begin{appendix}




\section{Problem Lower Bound}


\begin{ap_lemma}{\ref{lemma:case1-lower}}
If the first job released by $\MC{A}$ is scheduled at time $t$ with $0 \leq t \le \BIGP{\frac{1}{2}-x}\MC{B}$, then we have
$$\frac{\MC{E}(\Pi)}{\MC{E}(\MC{O})} \ge 2+\frac{1}{2}x-{O}\BIGP{\frac{1}{k}} \qquad \text{for any $0 \le x \le \frac{2}{5}$}.$$
\end{ap_lemma}

\begin{proof}[Proof of Lemma~\ref{lemma:case1-lower}]
Let $m$ be the number of jobs $\MC{A}$ releases, including the first job at time $0$. Suppose that these $m$ jobs are indexed in an ascending order according to their arrival time.
By definition $a_1 = 0$ and $a_m \le \frac{3}{2}\MC{B}$.
  
Hence, the online algorithm $\Pi$ turns on a processor at least $m$ times, in which the energy consumption for switching on
is at least $m\cdot E_w$. Moreover, except for $\BIGP{m-1}\epsilon_0$
units of time when no processor is on, $\Pi$ keeps a
processor on for $\BIGP{\frac{3\MC{B}}{2}-t}-\BIGP{m-1}\epsilon_0$ time units
in the time interval $[t, \frac{3\MC{B}}{2}]$. Therefore $\MC{E}(\Pi)$
is at least $mE_w + \BIGP{\frac{3\MC{B}}{2}-t}\psi-\BIGP{m-1}\epsilon_0\psi$, which is at least
$mE_w + \BIGP{1+x}\MC{B}\psi-\BIGP{m-1}\psi$.

Then, we give the upper bound for the energy consumption of an
optimal offline schedule by constructing a specific feasible offline
schedule for these $m$ jobs.
Note that, the job $\MC{A}$ releases at time $0$ is flexible and can be scheduled at any time between $t$ and $\MC{B}-1$.
Specifically, we consider $4$ cases, depending on the value of $m$.
  
\emph{When $m = 1$}, we have only one job to execute and one feasible
schedule is to turn on the processor at any time $t<\MC{B}-1$
followed by immediately turning off. Therefore, for $m=1$, $\MC{E}(\MC{O}) \le E_w+\psi$, and
$$\frac{\MC{E}(\Pi)}{\MC{E}(\MC{O})} \ge \frac{k(2+x)}{k+1} = \BIGP{2+x}-{O}\BIGP{\frac{1}{k}}.$$
%

\emph{When $m= 2$}, there are two subcases. (1) If $a_2 \leq \MC{B}$, then the two jobs can be scheduled to run consecutively and the energy consumption is $E_w+\BIGP{1+\epsilon_0}\psi$.
(2) If $a_2 > \MC{B}$, the first job can be scheduled to run at time $\MC{B}-1$, and the processor idles from $\MC{B}$ to $a_2$ and executes
the second job at time $a_2$. The energy consumption is
$E_w+\BIGP{a_2-\MC{B}+\BIGP{1+\epsilon_0}}\psi$, which is maximized when $a_2 = \frac{3}{2}\MC{B}$. 
Therefore, when $m=2$, 
for both subcases we have $\MC{E}(\MC{O}) \le E_w + \BIGP{\frac{\MC{B}}{2}+2}\psi$, and
$$\frac{\MC{E}(\Pi)}{\MC{E}(\MC{O})} \ge \frac{k(3+x)-1}{\frac{3k}{2}+2} = 2+\frac{2}{3}x - {O}\BIGP{\frac{1}{k}}.$$

\emph{When $m = 3$}, there are two subcases as well. (1) If $a_3-a_2 > \MC{B}$, then $a_2 < \MC{B}$, and the first two jobs can be scheduled to run consecutively. The processor is 
off from $a_2+\epsilon_0$ to $a_3$.
The energy consumption for this subcase is $2E_w+\BIGP{1+2\epsilon_0}\psi$. (2) If $a_3-a_2 \leq \MC{B}$, we schedule the first job at any time between $\frac{1}{2}\MC{B}$ and $\MC{B}-1$. Disregarding $a_2 \le \MC{B}$ or not, 
the energy consumption for this subcase is at most $E_w+\BIGP{\MC{B}+\epsilon_0}\psi$. 
Therefore, when $m=3$, we have
$\MC{E}(\MC{O}) \le 2E_w + 3\psi$, and
$$\frac{\MC{E}(\Pi)}{\MC{E}(\MC{O})} \ge \frac{k(4+x)-2}{2k+3} = 2+\frac{1}{2}x - {O}\BIGP{\frac{1}{k}}.$$

\emph{When $m \ge 4$}, we keep the
processor on from time $t$ to time $\frac{3\MC{B}}{2}+\epsilon_0$. The energy consumption for the optimal offline schedule is at most $E_w +\BIGP{\frac{3\MC{B}}{2}-t+1}\psi$, and
$$\frac{\MC{E}(\Pi)}{\MC{E}(\MC{O})} \ge \frac{4k+\BIGP{\frac{3k}{2}-t}-3}{k+\frac{3k}{2}-t+1} = \frac{11k-2t-6}{5k-2t+2}.$$
Since $\MC{E}(\Pi) > \MC{E}(\MC{O}) \ge 0$ and $t \ge 0$, the above fraction achieves its minimum at $t = 0$.  Hence, when $m=4$, we have
$$\frac{\MC{E}(\Pi)}{\MC{E}(\MC{O})} \ge \frac{11k-6}{5k+2} = \frac{11}{5} - {O}\BIGP{\frac{1}{k}}.$$

Combining 
the above four cases, we get 
$$\frac{\MC{E}(\Pi)}{\MC{E}(\MC{O})} \ge \min\BIGBP{2+x, \: 2+\frac{2}{3}x, \: 2+\frac{1}{2}x, \: \frac{11}{5}} - {O}\BIGP{\frac{1}{k}} = 2+\frac{1}{2}x-{O}\BIGP{\frac{1}{k}},$$
when $k$ is large enough and $0 \le x \le \frac{2}{5}$.
\qed
\end{proof}

\medskip

\begin{ap_lemma}{\ref{lemma:case2-lower}}
If the first job released by the adversary is not executed until time $\BIGP{\frac{1}{2}-x}\MC{B}$, then for any $x,\eta,\chi \ge 0$ we have
$$\frac{\MC{E}(\Pi)}{\MC{E}(\MC{O})} \ge \min\BIGBP{\frac{3+x+\eta}{\frac{3}{2}+x},\frac{4+x+\eta+\chi}{2+x+\eta},\frac{5+x+\eta+\chi}{2+x+\eta+\chi}}-{O}\BIGP{\frac{1}{k}}.$$
\end{ap_lemma}

\begin{proof}[Proof of Lemma~\ref{lemma:case2-lower}]
  By our design, $\MC{A}$ releases exactly $\BIGP{\frac{1}{2}+x}\MC{B}$ jobs
  before time $\MC{B}$, forcing $\Pi$ to use at least two
  processors to have a feasible schedule.  Let $m_1$ be the number of
  the urgent unit jobs that $\MC{A}$ releases between time $\MC{B}$ and time
  $\BIGP{\frac{3}{2}+\eta}\MC{B}$.  If $m_1 = 0$, then $\Pi$
  keeps at least one processor in the standby mode till time
  $\BIGP{\frac{3}{2}+\eta}\MC{B}$, and the energy consumption is
  $\MC{E}(\Pi) \ge 2\cdot E_w + \BIGP{\frac{1}{2}+x}\MC{B}\cdot \psi
  + \BIGP{\frac{1}{2}+\eta}\MC{B}\cdot \psi$.  A feasible offline
  schedule can execute all these jobs consecutively on a processor
  from time $\BIGP{\frac{1}{2}-x}\MC{B}$ to $\MC{B}$, with energy
  consumption equal to $E_w + \BIGP{\frac{1}{2}+x}\MC{B}\cdot
  \psi$. Therefore, for $m_1=0$, we know that $\MC{E}(\MC{O}) \le E_w +
  \BIGP{\frac{1}{2}+x}\MC{B}\cdot \psi$, which also implies
  \begin{equation}
    \label{eq:case2:m1=0}
	\frac{\MC{E}(\Pi)}{\MC{E}(\MC{O})} \ge \frac{3+x+\eta}{\frac{3}{2}+x}.
  \end{equation}

  
  For the other case with $m_1 \ge 1$, the adversary $\MC{A}$ monitors
  the behavior of $\Pi$ till time
  $\BIGP{\frac{3}{2}+\eta+\chi}\MC{B}$. Let $m_2$ be the number of
  the urgent unit jobs $\MC{A}$ releases after time
  $\BIGP{\frac{3}{2}+\eta}\MC{B}$.  Then by an analogous argument,
  we have $\MC{E}(\Pi) \ge (2+m_1+m_2)\cdot E_w +
  \BIGP{\frac{1}{2}+x}\MC{B}\cdot \psi +
  \BIGP{\frac{1}{2}+\eta+\chi}\MC{B}\cdot \psi -
  \BIGP{m_1+m_2}\psi$.
  Note that this function is minimized when $m_1$ and $m_2$ achieve their minimums.

  A feasible offline schedule on a processor for $m_1 \ge 1$ is
  constructed by turning on the processor at time
  $\BIGP{\frac{1}{2}-x}\MC{B}$ and executing the coming jobs. If $m_2
  = 0$, then we turn off the processor at time
  $\BIGP{\frac{3}{2}+\eta}\MC{B}$, and, hence,
  $\MC{E}(\MC{O}) \leq E_w + \BIGP{1+x+\eta}\MC{B}\psi$. Otherwise, we turn the processor
  off at time $\BIGP{\frac{3}{2}+\eta+\chi}\MC{B}$, and, hence,  $\MC{E}(\MC{O}) \leq E_w +
  \BIGP{1+x+\eta+\chi}\MC{B}\psi$.
  Therefore,
  \begin{align}
    \frac{\MC{E}(\Pi)}{\MC{E}(\MC{O})} & \ge \min\left\{
    \begin{array}{l}
      \frac{(3+m_1+x+\eta+\chi)k - m_1}{\BIGP{2+x+\eta}\cdot k},\\
    \frac{(3+m_1+m_2+x+\eta+\chi)k - \BIGP{m_1+m_2}}{\BIGP{2+x+\eta+\chi}\cdot k}
    \end{array}
    \right\}\nonumber\\
    & \ge \min\BIGBP{\frac{4+x+\eta+\chi}{2+x+\eta},
      \frac{5+x+\eta+\chi}{2+x+\eta+\chi}}-{O}\BIGP{\frac{1}{k}} \label{eq:case2:m1>0}
  \end{align}
  By Eqs. ~(\ref{eq:case2:m1=0}) and (\ref{eq:case2:m1>0}), the lemma is proved. 
\qed
\end{proof}

\medskip

\begin{ap_theorem}{\ref{thm_sparse_lower_bound}}
The competitive factor of any online algorithm $\Pi$ for the online
power-minimizing 
scheduling problem is at least $2.06$.
\end{ap_theorem}

\begin{proof}[Proof of Theorem~\ref{thm_sparse_lower_bound}]
%
%
  By Lemmas~\ref{lemma:case1-lower} and \ref{lemma:case2-lower}, we know that
  \begin{align*}
  \frac{\MC{E}(\Pi)}{\MC{E}(\MC{O})} \geq \sup_{\scriptsize\begin{array}{c} 0\le x \le 2/5 \\ 0 \le \eta,\chi\end{array}}\min 
  \BIGLR{\{}{.}{
		\frac{4+x}{2}, \frac{3+x+\eta}{\frac{3}{2}+x}, 
		 \frac{4+x+\eta+\chi}{2+x+\eta}, } 
		\BIGLR{.}{\}}{ 
		\frac{5+x+\eta+\chi}{2+x+\eta+\chi}
    }-{O}\BIGP{\frac{1}{k}}.
  \end{align*}

To get an asymptotic sup-min of the four items inside the brace, observe that, as $\chi$ increases, the third item increases while the fourth item decreases. Therefore, the value of $\chi$ for which the last two items achieve their max-min can be solved.
In this way, we can solve $\eta$ and $x$ to get the asymptotic lower bound for $\MC{E}(\Pi) / {\MC{E}(\MC{O})}$.

By 
choosing $x$, $\eta$, and $\chi$ to be $0.1218$, $0.2206$, and $0.4852$, respectively, we get $\MC{E}(\Pi) \ge 2.06\cdot\MC{E}(\MC{O})$ when $k \gg 1$.
\qed
\end{proof}

%


\section{Online Scheduling}


\subsection{Common Approaches and Bad Examples}

\begin{ap_lemma}{\ref{lemma_ltr_lower_bound} }
The competitive factor of $\MC{L}$ can be arbitrarily bad.
Furthermore, even when $c_j = 1$ for all $j \in \MC{J}$, the competitive factor of $\MC{L}$ is still at least $6$.
\end{ap_lemma}

\begin{proof}[Proof of Lemma~\ref{lemma_ltr_lower_bound}]

For the first part of this lemma,
%
consider the following job set, $\MB{J}_{edf}$. Let $k$ be a non-negative integer. At time $0$, a job $j_1 = (0, 4k, 3k)$ arrives and is delayed. The system remains in standby until time $k$, at which $k-1$ unit jobs arrive to the system with deadline $4k-1$. 
As a result, $j_1$ will miss its deadline unless we turn on $k$ processors.
This shows that, the number of processors $\MC{L}$ will use for $\MB{J}_{edf}$ can be arbitrarily large, whereas the optimal schedule uses only one processor, and the competitive factor can therefore be arbitrarily large.
%

%
For the second part of this lemma, we define another job set $\MB{J}$.
Let $n>0$ be an even integer. The job set $\MB{J}$ consists of $\frac{3}{2}n$ jobs: 
$j_1,j_2,\ldots, j_n$, $j^\prime_1,j^\prime_2,\ldots,j^\prime_{n/2}$, where
\begin{align*}
\begin{cases}
j_i = \BIGP{(i-1)\MC{B}+2i,i\MC{B}+2i+1}, & \text{for each $1\le i\le n$, and} \\
j^\prime_k = \BIGP{2k\MC{B}+4k, 2k\MC{B}+4k+1}, & \text{for each $1\le k\le \frac{n}{2}$.}
\end{cases}
\end{align*}

Let $\MC{E}(\MC{L})$ and $\MC{E}(\MC{O})$ be the energy consumed by
$\MC{L}$ and the offline optimal schedule for the input instance
$\MB{J}$, respectively.  We prove this theorem by showing that
$\MC{E}(\MC{L}) $ is asymptotically no less than $6\cdot\MC{E}(\MC{O})$.

First we consider the behavior of the algorithm $\MC{L}$. 
At the beginning, $\MC{L}$ postpones the execution of $j_1$ until time
$\MC{B}+2$, for which it turns on a processor to execute $j_1$,
followed by staying in standby for $\MC{B}$ units of time until it 
turns off the processor
at time $2\MC{B}+3$.  At time $2\MC{B}+4$ when $\MC{L}$
will schedule $j_2$ to execute, 
an urgent job
$j^\prime_1$ arrives. As a result, $\MC{L}$
has to turn on another processor for it.  Then $\MC{L}$ keeps both
processors in standby followed by
turning them off
at time $3\MC{B}+5$.
The same pattern is repeated for the remaining $\frac{3n}{2}-3$ jobs.
See also Fig~\ref{fig_uni_tight_6} for an illustration.
For each repetition of the pattern, say, for jobs $j_{2k-1}, j_{2k}$, and $j^\prime_k$, $1\le k\le \frac{n}{2}$,
the energy consumption of $\MC{L}$ is at least $3\cdot E_w + 3\MC{B}\cdot \psi_{\sigma} = 6E_w$.
Therefore $\MC{E}(\MC{L}) \ge \frac{n}{2}\cdot 6E_w = 3nE_w$.

\begin{figure*}[htp]
\centering
\fbox{\includegraphics[scale=.8]{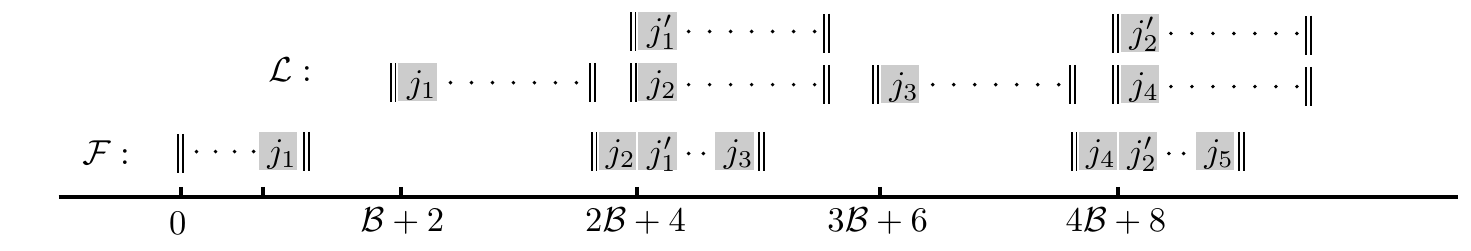}}
\caption{
Dotted horizontal lines represent the periods for which the processor is in the standby mode, and double vertical lines represent the turn-ons and turn-offs.}
\label{fig_uni_tight_6}
\end{figure*}

A feasible schedule for $\MB{J}$, denoted by $\MC{F}$, can be constructed as
follows.  For each $1\le k\le \frac{n}{2}$, we turn on a processor at
time $\max\BIGBP{0,\BIGP{2k-2}\MC{B}+4k-5}$ and turn off the processor
at time $\BIGP{2k-2}\MC{B}+4k-1$.
We schedule $j_{2k-1}$ at the time it arrives, which is at $\BIGP{2k-2}\MC{B}+4k-2$.
Then, $j_{2k}$ and $j^\prime_k$ are scheduled at
$2k\MC{B}+4k-1$ and $2k\MC{B}+4k$, respectively.
The energy consumption of this schedule, which is also an upper-bound
of 
$\MC{E}(\MC{O})$, is at most
$\BIGP{\frac{n}{2}+1}\BIGP{E_w+4\psi_{b}}$.
Therefore,
$$\frac{\MC{E}(\MC{L})}{\MC{E}(\MC{O})} \ge \frac{3nE_w}{\BIGP{\frac{n}{2}+1}\BIGP{E_w+4\psi_{b}}} = 6 - 
O\BIGP{\frac{1}{E_w+\psi_b}} - O\BIGP{\frac{1}{n}},$$
which converges to $6$ when $E_w \gg \psi_{b} \ge 1$ and $n \gg 1$.
\qed
\end{proof}

%


\subsection{The Analysis}

\begin{ap_lemma}{\ref{lemma-feasibility}}
Provided that Condition~(\ref{eq-edf}) from Lemma~\ref{lemma_offline_condition} holds for the input jobs, algorithm $\MC{S}$ always produces a feasible schedule.
\end{ap_lemma}

\begin{proof}[Proof of Lemma~\ref{lemma-feasibility}]
Consider any awaken-interval $I_i$, where $1\le i\le \kappa$. 
If the urgency flag is never set true within $I_i$, then we know that only $M_1$ is turned on during $I_i$, and we have $W(\ell_i, t) < t-\ell_i$ for all $t > \ell_i$ at time $\ell_i$ when processor $M_1$ is turned on.
Moreover, by the design of $\MC{S}$, once $M_1$ is on, it keeps executing jobs using the EDF principle until the queue $\MB{Q}(t)$ becomes empty.
Therefore by Lemma~\ref{lemma_offline_condition}, the schedule $\MC{S}$ produces during $I_i$ is feasible.
On the other hand, if the urgency flag is set true at some moment $t^*$ between $\ell_i$ and $r_i$, then by a similar argument, we know that the schedules $\MC{S}$ produces for the subsets $\MB{Q}^{t^*}_{proc}(t)$ and $\MB{Q}^{t^*}_{forth}(t)$ are feasible on $M_1$ and $M_2$, respectively, for all $\ell_i \le t \le r_i$.
%
%
\qed
\end{proof}

\begin{ap_lemma}{\ref{lemma-awaken-interval-property}}
For each $i$, $1\le i\le \kappa$, we have
\begin{itemize}
	\item
		$\MC{P}_i \le 2$.
		
		\smallskip
		
	\item
		If $\MC{P}_i = 2$, then the amount of workload that arrives after $r_{i-1}$ and has to be done before $r_i$ is at least $\lambda\MC{B}$.
\end{itemize}
\end{ap_lemma}

\begin{proof}[Proof of Lemma~\ref{lemma-awaken-interval-property}]
If the 
urgency flag is never set true in $I_i$, then $\MC{P}_i = 1$. Otherwise, suppose the system enters and leaves urgency at time $t^*$ and $t^\dagger$, respectively. By our scheme, $\MB{Q}^{t^*}_{proc}(t^\dagger)$ is empty,
and $W(t^\dagger,t) \le t-t^\dagger$ for all $t \ge t^\dagger$ by our assumption on Condition~(\ref{eq-edf}). Therefore $S$ will not turn on $M_1$ again before $I_i$ ends. This shows $\MC{P}_i \le 2$.

\smallskip

Suppose that $\MC{S}$ turns on $M_2$ at time $t^*$, and $t^\dagger$ is the corresponding moment for which $W(t^*,t^\dagger)> t^\dagger-t^*$.
%
%
Then by the design of $\MC{S}$, we know that
the ready queue is never empty between time $\ell_i$ and $t^*$, for otherwise it will contradict Condition~(\ref{eq-edf})
since $W(t^*,t^\dagger)> t^\dagger-t^*$. 
Therefore the system stays in the running mode and never enters the standby mode between time $\ell_i$ and $t^*$.

Since we use EDF principle, the workload that has to be done before $t^\dagger$ is hence strictly greater than $t^\dagger-\ell_i$. Namely, among these jobs, at least one, say, $j$, has arrival time earlier than $\ell_i$. That is, $r_{i-1} \le a_j < \ell_i$. Therefore $d_j \ge \ell_i + \lambda\MC{B}$; otherwise $\MC{S}$ would have turned on $M_1$ earlier than $\ell_i$ 
when the energy-efficient anchor of $j$ is met. Therefore the total workload that is done within $I_i$ is at least $t^\dagger-\ell_i \ge d_j - \ell_i \ge \lambda\MC{B}$.
\qed
\end{proof}

\medskip

\begin{ap_lemma}{\ref{lemma_sparse_sleep_vs_working}}
For any $i$, $1\le i\le \kappa$, if 
both $r_{i-1}$ and $\ell_i$ are contained in some sleep interval $U_u = (\ell_{u,opt}, r_{u,opt})$ of $\MC{O}$,
then $\MC{P}_i = 1$ and $r_i - r_{u,opt} \ge \BIGP{1-\lambda}\MC{B}$.
\end{ap_lemma}

\begin{proof}[Proof of Lemma~\ref{lemma_sparse_sleep_vs_working}]
Since $I_{i-1}$ ends when the ready queue is empty, the arrival time of each job in $\MB{J}_i$ is later than $r_{i-1}$. 
Therefore $\MC{P}_i = 1$ as $\MC{O}$ is a feasible uni-processor schedule which turns on the system to process jobs no earlier than $\ell_i$.

Let $j$ be the job scheduled to execute at time $\ell_i$. Since we use EDF principle, $d_{j^\prime} \ge d_j \ge r_{u,opt}$ for all $j^\prime \in \MB{Q}(\ell_i)$.
There are two cases to set $\ell_i$ in $\MC{S}$: (1) 
the jobs
in $\MB{Q}(\ell_i)$ will miss the deadline if the system is not turned
on for processing at time $\ell_i$, and (2) the energy-efficient anchor of a certain job is $\ell_i$. For the first case, the sleep interval $U_u$ cannot 
contain $\ell_i$.
Hence, when $\ell_i$ 
is contained in 
$U_u$, we have
$d_j - \ell_i \le \lambda\MC{B}$. Therefore,
$r_i - r_{u,opt} \ge r_i - d_j \ge r_i - \BIGP{\lambda\MC{B}+\ell_i} \ge \BIGP{1-\lambda}\MC{B}$.
\qed
\end{proof}

\medskip

\begin{ap_lemma}{\ref{lemma_sparse_sleep_n_awaken}}
Each sleep interval of $\MC{O}$ intersects with at most two awaken-intervals of $\MC{S}$. In particular, each of $U_1$ and $U_m$ intersects with at most one.
\end{ap_lemma}

\begin{proof}[Proof of Lemma~\ref{lemma_sparse_sleep_n_awaken}]
This lemma follows directly from Lemma~\ref{lemma_sparse_sleep_vs_working}. For $1\le i\le\kappa-2$, if $r_i$ and $\ell_{i+1}$ are contained in a sleep interval $U_u = (\ell_{u,opt}, r_{u,opt})$ of $\MC{O}$, then $r_{i+1} - r_{u,opt} \ge \BIGP{1-\lambda}\MC{B} > 0$, which implies that $\ell_{i+2}$ will not be contained in $U_u$.
\qed
\end{proof}

\medskip

\begin{ap_lemma}{\ref{lemma_sparse_energy_bound}}
  For each awaken-interval $I_i$, $1\le i\le \kappa$ and $\lambda\leq 1$, 
  \begin{align*}
    \MC{E}_{w,\sigma}(\MC{S},\ell_i,r_i) \le \psi_{\sigma}\BIGP{\sum_{j\in \MB{J}_i} c_j} + (3-\lambda)E_w. \\
    \text{Furthermore, if } \MC{P}_i = 1,\text{ then } \MC{E}_{w,\sigma}(\MC{S},\ell_i,r_i) \le 2E_w.
  \end{align*}
\end{ap_lemma}

\begin{proof}[Proof of Lemma~\ref{lemma_sparse_energy_bound}]
  If $\MC{P}_i = 1$, then the total idle time in $I_i$ is at most $\MC{B}$, since in our design, $\MC{S}$ will 
turn off the processor 
immediately once $\MC{Q}(t)$ is empty when $t \ge \ell_i+\MC{B}$. Therefore $\MC{E}_{w,\sigma}(\MC{S},\ell_i,r_i) \le E_w + \psi_{\sigma}\cdot\MC{B} = 2E_w$, which is at most $\psi_{\sigma}\cdot\BIGP{\sum_{j\in \MB{J}_i} c_j} + \BIGP{3-\lambda}\MC{B}$ when $\lambda \le 1$.
  
If $\MC{P}_i = 2$, only processor $M_2$ could be in the standby mode according to our design.
For the total standby time in $I_i$, we have two cases to consider: 
(1) If the system exits urgency after $\ell_i+\MC{B}$, then the total standby time on processor $M_2$ is upper-bounded by the workload on processor $M_1$, which is at most $\sum_{j\in \MB{J}_i} c_j$. 
(2) If the system exits urgency before $\ell_i+\MC{B}$, the total standby time is at most $\MC{B}$, which is at most $\sum_{j\in \MB{J}_i} c_j + \BIGP{1-\lambda}\MC{B}$ since $\sum_{j\in \MB{J}_i} c_j \ge \lambda\MC{B}$ by Lemma~\ref{lemma-awaken-interval-property}.
In both cases, the total standby time is at most $\sum_{j\in \MB{J}_i} c_j + \BIGP{1-\lambda}\MC{B}$.
Therefore $\MC{E}_{w,\sigma}(\MC{S},\ell_i,r_i) \le 2E_w + \psi_{\sigma}\cdot\BIGP{\sum_{j\in \MB{J}_i} c_j + \BIGP{1-\lambda}\MC{B}}$.
By Lemma~\ref{lemma-awaken-interval-property}, 
we only have to consider $\MC{P}_i=1$ and $\MC{P}_i = 2$, and this concludes the proof.
\qed
\end{proof}

\medskip

%



\subsection{Jobs with Unit Execution Time}

\begin{ap_lemma}{\ref{lemma_sparse_unit_energy_bound}}
When the jobs have unit execution times, for each awaken-interval $I_i$, $1\le i\le k$, we have $\MC{E}_{w,\sigma}(\MC{S}^\dagger,\ell_i,r_i) \le \BIGP{3-\frac{\lambda}{2}}E_w$.
%
\end{ap_lemma}

\begin{proof}[Proof of Lemma~\ref{lemma_sparse_unit_energy_bound}]
Since the total time in $I_i$ is at most $\BIGP{1-\frac{\lambda}{2}}\MC{B}$, we have $$\MC{E}_{w,\sigma}(\MC{S}^\dagger,\ell_i,r_i) \le 2E_w + \BIGP{1-\frac{\lambda}{2}}\MC{B}\cdot\psi_{\sigma} \le \BIGP{3-\frac{\lambda}{2}}E_w.$$
\qed
\end{proof}

\medskip

\begin{ap_lemma}{\ref{lemma_sparse_unit_energy_ratio}}
When $7-\sqrt{41} \le \lambda \le 4-\sqrt{10}$, we have
\begin{align*}
\MC{E}_{w,\sigma}(\MC{S}^\dagger,0,\infty) \le \BIGP{3-\frac{\lambda}{2}}\cdot\MC{E}_{b,\sigma}(\MC{O},0,\infty) + \BIGP{4-\frac{\lambda}{2}}\cdot\MC{E}_{w}(\MC{O},0,\infty).
\end{align*}
\end{ap_lemma}

\begin{proof}[Proof of Lemma~\ref{lemma_sparse_unit_energy_ratio}]
We sketch only the main differences from the proof of Lemma~\ref{lemma_sparse_energy_ratio} and adopt the remaining detail.
For any awaken-interval $I_i$ of $\MC{S}^\dagger$,
\begin{itemize}
	\item
		If $I_i$ does not intersect with any sleep interval of $\MC{S}^\dagger$, then 
		$$\MC{E}_{w,\sigma}(\MC{S}^\dagger,\ell_i,r_i) \le \BIGP{3-\frac{\lambda}{2}}\cdot E_w \le \BIGP{3-\frac{\lambda}{2}}\cdot\MC{E}_{b,\sigma}(\MC{O},\ell_i,r_i).$$
		
	\item
		If $I_i$ intersects with a sleep interval $U_u = (\ell_{u,opt},r_{u,opt})$ and
		$r_i$ is contained in $U_u$,
		we have two subcases.
		(a) If $\MC{P}_i=1$, then $$\MC{E}_{w,\sigma}(\MC{S}^\dagger,\ell_i,r_i) \le 2E_w = 2\MC{E}_{w}(\MC{O},\ell_{u,opt},r_{u,opt}).$$
		
		(b) If $\MC{P}_i = 2$, then $\MC{E}_{b,\sigma}(\MC{O},r_{i-1},\ell_{u,opt}) 
		\ge \lambda E_w$ by Lemma~\ref{lemma-awaken-interval-property}. 
		When $7-\sqrt{41}\le \lambda\le 1$, we have 
		$$\BIGP{1-\lambda}E_w \le \lambda\BIGP{3-\frac{\lambda}{2}}E_w.$$
		%
		Therefore, 
		\begin{align*}
		\MC{E}_{w,\sigma}(\MC{S}^\dagger,\ell_i,r_i) & \le \BIGP{1-\frac{\lambda}{2}}E_w + 2E_w \\
		& \le \BIGP{3-\frac{\lambda}{2}}\MC{E}_{b,\sigma}(\MC{O},r_{i-1},r_i) + 2\MC{E}_{w}(\MC{O},\ell_{u,opt},r_{u,opt}).
		\end{align*}
		
	\item
		If $I_i$ intersects with exactly one sleep interval $U_u = (\ell_{u,opt}, r_{u,opt})$ and
		$\ell_i$ is contained in $U_u$,
		then by Lemma~\ref{lemma_sparse_sleep_vs_working}, $\MC{P}_i = 1$ and
		$\MC{E}_{b,\sigma}(\MC{O},r_{u,opt},r_i) \ge 
		\BIGP{1-\lambda}E_w.$
		When $0\le \lambda \le 4-\sqrt{10}$, we have 
		$$\frac{\lambda}{2}E_w \le \BIGP{1-\lambda}\BIGP{3-\frac{\lambda}{2}}E_w.$$
		Therefore, 
		\begin{align*}
		 &\MC{E}_{w,\sigma}(\MC{S}^\dagger,\ell_i,r_i) \le 2E_w = \frac{\lambda}{2}E_2 + \BIGP{2-\frac{\lambda}{2}}E_w \\
         \le \enskip &
         \BIGP{3-\frac{\lambda}{2}}\MC{E}_{b,\sigma}(\MC{O},r_{u,opt},r_i) 
		 + \BIGP{2-\frac{\lambda}{2}}\MC{E}_{w}(\MC{O},\ell_{u,opt},r_{u,opt}).           
		\end{align*}
\end{itemize}

From the above three cases, when $7-\sqrt{41} \le \lambda \le
4-\sqrt{10}$, 
we have
\begin{align*}
\MC{E}_{w,\sigma}(\MC{S}^\dagger,0,\infty) 
\le \BIGP{3-\frac{\lambda}{2}}\cdot\MC{E}_{b,\sigma}(\MC{O},0,\infty) + \BIGP{4-\frac{\lambda}{2}}\cdot\MC{E}_{w}(\MC{O},0,\infty).
\end{align*}
\qed
\end{proof}

%


%




\subsection{Trading the Number of Processors with the Energy-Efficiency} 

\begin{ap_theorem}{\ref{thm-multi-core-factor}}
Given a collection of $k$ job sets, each satisfying Condition~(\ref{eq-edf}), we can compute a $\BIGP{4\cdot\max\BIGBP{\CEIL{\frac{k}{h-k}},1}}$-competitive schedule which uses at most $h$ processors, where $h >k$, for the online power-minimizing scheduling problem.
\end{ap_theorem}

\begin{proof}[Proof of Theorem~\ref{thm-multi-core-factor}]
The feasibility of the modified algorithm follows directly from our assumption on Condition~(\ref{eq-edf}) and Lemma~\ref{lemma_offline_condition}.

For the competitive factor of the algorithm, it suffices to consider the case $k<h<2k$. Consider the partition of the collection of job streams. Without loss of generality, let $\MC{J}_1, \MC{J}_2, \ldots, \MC{J}_{\CEIL{k/(h-k)}}$ be one subset, and let $M_0, M_1, \ldots, M_{\CEIL{k/(h-k)}}$ be the processors allocated to these job streams.
Consider any specific stream $\MC{J}_i$, $1\le i\le \CEIL{k/(h-k)}$, and the two processors $M_0$ and $M_i$. Observe that the lemmas presented in \S~\ref{sec-sparse-analysis} still holds for $M_0$ and $M_i$.
Therefore the energy consumed by $M_0$ and $M_i$ is bounded by four times the energy consumption of any offline optimal schedule for $\MC{J}_1, \MC{J}_2, \ldots, \MC{J}_{\CEIL{k/(h-k)}}$.
Since we have exactly $\CEIL{k/(h-k)}$ such pairs, each unit of energy consumption in the offline optimal schedule, including switch-on operations, energy for executing jobs, and energy for standby, is charged at most $\CEIL{k/(h-k)}$ times. Therefore, by taking all the pairs $M_0$ and $M_i$ into account, we get the factor $4\cdot\CEIL{\frac{k}{h-k}}$ as claimed.
\qed
\end{proof}

\end{appendix}


\end{document}